\tikzset{%
  % Stil für linke geschweifte Klammern mit Beschriftung:
  mleftdelimiter/.style={inner ysep=0pt, inner xsep=1ex,left delimiter=\{,label={[label distance=3mm]left:#1}}
}
\newcommand{\E}{\mathbb{E}}
\newcommand{\N}{\mathbb{N}}
\newcommand{\R}{\mathbb{R}}
\newcommand{\cM}{\mathcal{M}}
\newcommand{\cF}{\mathcal{F}}
\newcommand{\cA}{\mathcal{A}}
\newcommand{\eqd}{\stackrel{\mathrm{d}}=}
\newcommand{\1}{\mathds{1}}
\newcommand{\de}{\mathrm{\,d}}
\newcommand{\Ran}{\mathsf{Ran}}
\newcommand{\Var}{\mathrm{Var}}
\newcommand{\setcounterprefix}[2]{%
  \setcounter{#1}{0}%
  \expandafter\def\csname theH#1\endcsname{#2.\arabic{#1}}% ...if you're loading hyperref
  \expandafter\def\csname the#1\endcsname{#2.\arabic{#1}}%
}
\DeclareMathOperator{\v@r}{V@R}
\DeclareMathOperator{\avar}{AVaR}
\DeclareMathOperator{\av@r}{AV@R}
\newtheorem{theorem}{Theorem}[section]
\newtheorem{proposition}[theorem]{Proposition}
\newtheorem{corollary}[theorem]{Corollary}
\newtheorem{lemma}[theorem]{Lemma}
\newtheorem{example}[theorem]{Example}
\newtheorem{remark}[theorem]{Remark}
\numberwithin{figure}{section}%
\numberwithin{table}{section}
 \renewcommand\appendix{\par
   \setcounter{section}{0}%
   \setcounter{subsection}{0}%
   \setcounter{figure}{0}%
   \renewcommand\thesection{\Alph{section}}%
   \renewcommand\thefigure{\Alph{section}.\arabic{figure}}}
\title{Robust Bernoulli mixture models for credit portfolio risk}
\author{Jonathan Ansari$^{1}$ and Eva L\"{u}tkebohmert$^{2}$}
\begin{document}

\maketitle

\vspace{-6ex}
\begin{center}
\small\textit{$^{1}$Department of Mathematics, \\ University of Salzburg,
Hellbrunner Stra{ss}e 34, 5020 Salzburg, Austria.\\
$^{2}$Department of Quantitative Finance,\\ Institute for Economic Research, University of Freiburg,\\ Rempartstr. 16,
79098 Freiburg, Germany.}                                                                              \end{center}

\begin{abstract}
%% Text of abstract
This paper presents comparison results and establishes risk bounds for credit portfolios within classes of Bernoulli mixture models, 
assuming conditionally independent defaults that are stochastically increasing with a common risk factor.
We provide simple and interpretable conditions for conditional default probabilities that imply a comparison of credit portfolio losses in convex order. In the case of threshold models, the ranking of portfolio losses is based on a pointwise comparison of the underlying copulas.
Our setting includes as special case the well-known Gaussian copula model but allows for general tail dependencies, which are crucial for modeling credit portfolio risks. Moreover, our results extend the classical parameterized models, such as the industry models CreditMetrics and KMV Portfolio Manager, to a robust setting where individual parameters or the copula modeling the dependence structure can be ambiguous. 
A simulation study and a real data example under model uncertainty offer evidence supporting the effectiveness of our approach.

\noindent
{\bf Keywords:} 
Average Value-at-Risk,
convex order.
copula,
CreditMetrics,
factor model,
KVM Portfolio Manager,
model uncertainty,
positive dependence,
tail dependence	
%\noindent{{\bf JEL classification:}}
\end{abstract}
%% \linenumbers

%% \linenumbers

%% main text

\section{Introduction}

A fundamental problem in modeling credit portfolio risk is that defaults are rare events, resulting in very limited available data.
While the default probabilities of individual entities can generally be estimated with reasonable accuracy, e.g. using historical data or financial and macroeconomic indicators, and can therefore be treated as known, modeling the typically positive dependencies between default events poses a greater challenge. These dependencies, driven by common exposure to economic cycles, market shocks, and industry-specific risk factors, are complex to capture accurately due to the curse of dimensionality.
For this reason, factor models for credit portfolio risk have been introduced, which are economically interpretable, easier to estimate, have well understood properties, and are reliable under different scenarios.
Well known examples of industry credit risk models are CreditMetrics (cf. \cite{CreditMetrics}), KMV's PortfolioManager (cf. \cite{Kealhofer2001}), and CreditRisk$^+$ (cf. \cite{CreditRisk+}). The first two models belong to the class of Bernoulli mixture models (BMMs) where default is described by the event when the firm's latent asset return falls below a given threshold reflecting its liabilities. 
In contrast, the CreditRisk$^+$ model is an actuarial approach that approximates the Bernoulli default indicator with a Poisson random variable, resulting in a Poisson mixture model as an approximation of the BMM.
However, due to the scarcity of data, estimating the dependencies between defaults of different entities remains a challenging task, even in factor models. This leads to ambiguity in some parameter estimates such as the asset correlation coefficient. 
Additionally, asset returns often exhibit tail dependencies, which are typically underestimated by the widely used Gaussian factor models. Consequently, tail risks are likely to be underrepresented in simulated portfolio losses. 
These challenges highlight the need to study robustness in credit risk models, accounting for both tail dependencies and model uncertainty. \\

In this paper, we establish robustness results for stochastically increasing Bernoulli mixture models (siBMMs), where the conditional default probabilities of individual borrowers increase with a common risk factor, which may be either a latent random variable or an observable random variable, such as a market index.
Our main contribution, Theorem \ref{propcomBMMs}, is a comparison result for credit portfolio losses where we provide straightforward and interpretable conditions on the conditional default probability functions that lead to a convex ordering of the losses. More specifically, the loss variable of portfolio $A$ is smaller in convex order than that of portfolio $B$ if the default events in portfolio $B$ exhibit a stronger positive dependence on the (potentially portfolio-specific) common risk factor than those in portfolio~$A$. 
The proof relies on a general construction of siBMMs via copula-based threshold models. While most of the paper assumes that default probabilities increase with the common risk factor, Theorem \ref{thebmmsiBMM} presents a robustness result that holds even without this monotonicity assumption.
We then address model uncertainty by establishing lower and upper bounds for credit portfolio losses in convex order (Theorem \ref{mainthe}) and for credit portfolio risks in terms of law-invariant, convex risk measures (Corollary \ref{cormaithe}). 
Our robustness results enable a simple comparison of different models, allowing risk measures to be ranked across various model specifications. As an additional contribution of our paper, we provide a comparison of credit portfolio losses in convex order through a pointwise comparison of the copulas underlying the threshold models (Proposition \ref{lemconord}). 
We illustrate our results in two contexts: first, when individual parameters are ambiguous, and second, when there is uncertainty about the copula modeling the dependence between default events. 
In both cases, we calculate bounds for the Average Value-at-Risk (AVaR), a benchmark risk measure in the Basel III and Solvency II frameworks, using both simulated and real portfolios.

\bigskip

Our paper relates to a large literature on credit risk models. For a general introduction to credit risk modeling, we refer to \cite{Bluhm} among many others. 
%We focus on the class of BMMs.
Latent factor models have been studied in \citet{Koyluoglu-1998}, \cite{Crouhy}, and \citet{Gordy-2000}, who also point out the equivalence between threshold models and mixture models for the special case of CreditMetrics and CreditRisk\(^+\). 
\cite{FreyMcNeilNyfeler} relate factor models to copula models. More specifically, they show that the distribution of the number of defaults in a portfolio strongly depends on the underlying copula. We consider copula-based threshold models and link them to the class of siBMMs.

\cite{FreyMcNeilNyfeler} also point out that the Gaussian copula is unable to reflect tail dependence (compare also \cite{Embrechts-2015}, Section 11). 
%The shortcomings of the Gaussian copula model for the pricing of credit derivatives are also discussed in 
\cite{DonnellyEmbrechts2010} highlight the inability of the Gaussian copula model to reflect default clustering and the associated tail dependence. 
\cite{CrookMoreira2011} empirically document that non-Gaussian copula families can better model dependence structures of credit portfolios. The impact of the choice of the copula model in stress testing applications has been empirically studied, for example, in \cite{KoziolSchellEckhardt2015}. % for 17 German banks. 
Their results indicate that the Gaussian copula is suitable when analysing the effect of stress scenarios characterized by extreme losses and high correlations while heavy-tailed copulas are more appropriate in less extreme adverse scenarios.
Non-Gaussian copula models for the modelling of dependent defaults in intensity models have first been introduced in \cite{Schoenbucher2001}. 
We consider flexible dependence structures described by general stochastically increasing copulas, including as important examples the Gaussian and various Archimedean copula settings.

Our paper also relates to the literature on model uncertainty. \cite{FreyMcNeilNyfeler} document that portfolio credit risk models are subject to substantial model risk, with small variations in the dependence structure leading to significant changes in the loss distribution. This highlights the need for a more robust approach, which is even more important given the difficulty in reliably calibrating these models (compare also \cite{Frey-McNeil-2003}). 
Worst case bounds for credit portfolio risks with and without dependence information are studied for the Value-at-Risk (VaR) in \cite{Bernard-2018,Bernard-2018c}, for the Tail VaR in \cite{Bernard-2014}, for the Range VaR  in \cite{Bernard-2020}, \cite{Li-2018}, and for distortion risk measures in \cite{Pesenti-2024}. \cite{Embrechts-2013} provide a rearrangement algorithm to explicitly calculate sharp lower and upper bounds for portfolio VaR when the marginal distributions of individual risks are known and show that these bounds can be improved when higher order marginal information is available. 
More generally, risk bounds for sums of random variables 
are developed in \cite{Wang-2011} and \cite{Embrechts-2006}, when only the components' marginal distributions are known, and in \cite{Denuit-1999}, when additionally the joint distribution of the components is bounded by some known distribution.
These bounds are typically too wide to be practically relevant, which motivates to include additional dependence information and structural assumptions to narrow the bounds. For instance, \cite{Bernard-2017} consider as additional constraint that the variance of the portfolio sum is upper bounded. \cite{Bernard-2017b} assume a partially specified factor model, where the joint distributions of each component and the common risk factor are known but the conditional distributions are unspecified.
%marginals are known but the joint distribution is not known. 
\cite{Lux-2019} improve risk bounds by considering an upper and a lower bound on the copula of the risk factors. \cite{Puccetti-2017} provide risk bounds for the VaR of sums of random variables, assuming the components' marginal distributions are known and subgroups of components are independent. \cite{Bignozzi-2015} study bounds for convex risk measures when marginal distributions are known and there is additional positive or negative dependence information. 
\cite{Bernard-Vanduffel-2015} assume dependence information under certain scenarios, leading to significantly tighter risk bounds.

We build on the aforementioned literature by, on the one hand, extending standard industry models to a robust setting that accommodates model uncertainty and tail dependence. On the other hand, our robustness results for BMMs may also be seen in the context of improved credit portfolio risk bounds under the structural assumption of conditional independence---an assumption that is both tractable and flexible, allowing for a broad spectrum of positive dependencies. 
In addition, our ordering results enable a ranking of portfolio losses in terms of their credit risk, which provides a transparent and objective basis for the assessment of portfolio risk levels.

The rest of the paper is organized as follows. Section \ref{intro} establishes our main result---a convex ordering of credit portfolio losses for siBMMs. Section \ref{seccompres} derives lower and upper bounds for credit portfolio losses. Section~\ref{secappl} illustrates our results in both simulated and real data examples. Appendix \ref{appendA} provides a representation of siBMMs via copula-based threshold models, while all proofs are deferred to Appendix \ref{appendB}.

\section{Setting and Main Result}\label{intro}

We consider a portfolio of $N$ loans to borrowers $n=1,\ldots,N$, and denote by $\pi_n$ the (unconditional) default probability and by $D_n \colon \Omega \to \{0,1\}$  the default indicator of borrower $n$, where \(D_n = 1\) indicates default and \(D_n = 0\) no default\footnote{ 
We generally assume that the underlying probability space \((\Omega,\cA,P)\) is atomless and, hence, admits a continuous random variable.
}.
Let \(Z\) be a real-valued random variable, which will serve as a common risk factor.
As default models, we consider a class of \emph{Bernoulli mixture models} (BMMs) \(\{(D_1,\ldots,D_N,Z)\}\,,\) where 
\begin{enumerate}[(I)]
    \item \label{ass1} \(\mathbb{P}(D_n=1) = 1-\mathbb{P}(D_n=0) = \pi_n\in [0,1]\) and
%\end{enumerate}
%Let \(Z\) be a real-valued random variable. Then assume that
%\begin{enumerate}[(I)]\setcounter{enumi}{1}
    \item \label{ass2} \(D_1,\ldots,D_N\) are conditionally independent given \(Z.\)
\end{enumerate}
Denote by \(p_{D_n}(z):=\mathbb{P}(D_n=1 \mid  Z=z)\) the conditional default probability function. If, additionally to \eqref{ass1} and \eqref{ass2},
\begin{enumerate}[(I)]\setcounter{enumi}{2}
    \item \label{ass3} \(p_{D_n}\) is an increasing function in \(z\,,\)
\end{enumerate}
we refer to \((D_1,\ldots,D_N,Z)\) as \emph{stochastically increasing Bernoulli mixture model} (siBMM), noting that a random variable \(X\) is \emph{stochastically increasing (SI)} in \(Y\) if, for all \(x\in \R\,,\) \(\mathbb{P}(X\geq x \mid Y=y)\) is increasing in \(y\).
%Hence, $\pi_n$ is the unconditional default probability of borrower $n$ that is typically known.
The individual default probabilities \(\pi_1,\ldots,\pi_N\) can often be estimated and thus assumed to be known, so Assumption \eqref{ass1} is reasonable. Since default events are typically not independent, but rather positively dependent, the dependence structure between \(D_1,\ldots,D_N\) is crucial for modeling portfolio risks. 
We impose conditional independence in \eqref{ass2} as standard structural assumption for modeling simple dependencies, where the common risk factor \(Z\) may be an observable factor, e.g. a market index, or a latent random variable that follows an arbitrary distribution, e.g. a discrete or continuous distribution.
%Since we examine upper and lower bounds for various classes of dependencies, our results may also cover mild conditional dependencies between default events.
%and may be obervable or a latent variable 
Assumption \eqref{ass3} incorporates positive dependence into our model, stating that
the default indicators \(D_n\) are stochastically increasing in \(Z.\)
%see Definition \ref{defSI} and \cite[Section 3.10]{Mueller-Stoyan-2002}.
%Assumption \eqref{ass3} states that the default probabilities increase in the common risk factor \(Z.\) 
This assumption is typically not imposed in the definition of a BMM, but is fulfilled by many models; see \cite[Section 11.2]{Embrechts-2015}. In practice, the risk factor \(Z\) may be interpreted as a market index which has a positive or negative influence on the default probabilities. For example, if the market deteriorates, default events may be more likely. In the latter case, \(Z\) is a decreasing transformation of a market index. 
The assumptions of a siBMM allow to model various cases of positive dependence for \((D_1,\ldots,D_n)\,,\) in particular, independence (where the function \(p_{D_n}\) is constant for all \(n\)) and, if \(F_Z\) is continuous, also comonotonocity\footnote{A bivariate random vector \((X,Y)\) is said to be \emph{comonotonic} if there exist increasing functions \(f\colon \R\to \R\) and \(g\colon \R\to \R\) and a random variable \(U\) such that \(X=f(U)\) and \(Y=g(U)\) almost surely. In this regard, comonotonicity models perfect positive dependence.} (where \(p_{D_n}\) takes the form \(p_{D_n}(z) = \1_{\{z\geq F_Z^{-1}(1-\pi_n)\}}\) for all \(n\)). The latter case means that default is a deterministic function of \(Z.\)

Let \(e_n\in (0,\infty) \) denote the (deterministic) exposure at default of borrower $n$ and \(\delta_n\in (0,1] \) its (percentage) loss given default, which is assumed to be a random variable that is independent of all other sources of randomness. 
Then the bank's total loss is given by
\begin{align}\label{equ L}
    L:=\sum_{n=1}^N e_n \delta_n D_n.
\end{align} 
\begin{figure}
    \centering
    \includegraphics[scale = 0.6, trim={0 30 0 40},clip]{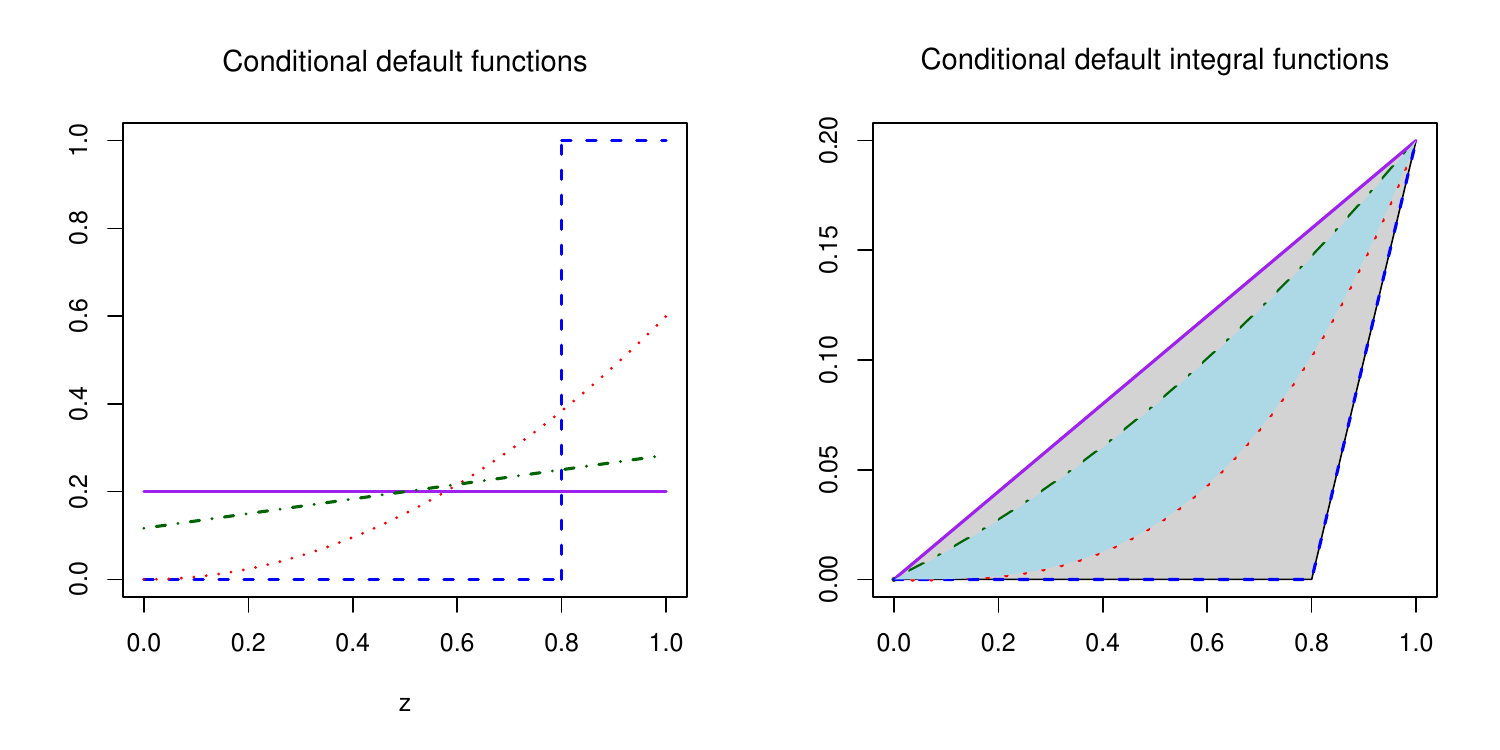}
    \caption{Left plot: Some examples of transformed conditional default probabilities \(p_{D_n}\circ F_Z^{-1} = \mathbb{P}(D_n =1 \mid Z = F_Z^{-1}(\cdot))\) which are by Assumption \eqref{ass3} increasing and satisfy the constraint \(\int_0^1 \mathbb{P}(D_n =1 \mid Z = F_Z^{-1}(t)) \de t = \pi_n\) with \(\pi_{n}\) chosen as \(0.2.\) The solid line models independence of \(D_n\) and \(Z\,,\) while the dashed line models perfect positive dependence of \(D_n\) on \(Z.\) Right plot: Default integral functions \(G_{D_n,Z}\) defined in \eqref{defcdpf} as the antiderivatives of the respective transformed conditional default probabilities from the left plot, noting that each \(G_{D_n,Z}\) is increasing convex and passes through the points \((0,0)\) and \((0,\pi_{n}).\) Under Assumption \eqref{ass3}, the independent and comonotonic dependence relations mark the extreme cases (solid and dashed) while, for all other dependencies, the default integral functions lie between these extremes.}
    \label{fig:transformed cond PDs}
\end{figure}
As a main contribution, we provide simple and interpretable conditions on the conditional default probabilities \(p_{D_n}\) to imply a comparison of credit portfolio losses in convex order. Recall that, for non-negative random variables \(X\) and \(Y,\) the convex order \(X\leq_{cx} Y\) is defined by \(\E \varphi(X) \leq \E \varphi(Y)\) for all convex functions \(\varphi\colon [0,\infty) \to \R;\) see \cite[Section 3.A]{Shaked-Shantikumar-2007} for an overview of the convex order.
More precisely, our conditions involve a pointwise comparison of the \emph{default integral functions}
\begin{align}\label{defcdpf}
    G_{D_n,Z}(s) :=  \int_0^s \mathbb{P}(D_n = 1 \mid Z = F_Z^{-1}(t)) \de t
    %\mathbb{P}(D_n = 1, Z\leq F_Z^{-1}(s)) 
    = \int_0^s p_{D_n}(F_Z^{-1}(t)) \de t, \quad s\in [0,1],
    %\mathbb{P}(D_n = 1, Z\leq F_Z^{-1}(s)).
\end{align}
 associated with \(p_{D_n}\). Here, \(F_Z^{-1}\) denotes the generalized inverse of the distribution function of \(Z\,,\) i.e., \(F_Z^{-1}(t) := \inf \{ x \mid F_Z(x)\geq t\}\) for \(t\in (0,1).\)  Thus, $G_{D_n,Z}(s)$ gives the average conditional default probability over the lowest $s$-quantile of the risk factor $Z$.
It is obvious that \(G_{D_n,Z}\) is a continuous increasing function mapping from \((0,1)\) to \([0,\pi_n]\) such that \(G_{D_n,Z}(0) = 0\) and \(G_{D_n,Z}(1) = \pi_n.\) Further, under Assumption \eqref{ass3}, \(G_{D_n,Z}\) is convex and lies between the extreme cases of independence and comonotonicity as discussed before; see Figure \ref{fig:transformed cond PDs}. 
The following theorem is our main result.

\begin{theorem}[Comparison of losses in siBMMs]\label{propcomBMMs}~\\
Consider default models \((D_1,\ldots,D_N,Z)\) and \((D_1',\ldots,D_N',Z')\) that satisfy Assumptions \eqref{ass1}--\eqref{ass3} with unconditional default probabilities $(\pi_1,\ldots,\pi_N)$.
%with conditional default probability functions \(p_{D_n,Z}\) and \(p_{D_n',Z'}\) for \(n\in \{1,\ldots,N\}.\) 
Let \(\delta_1,\ldots,\delta_N\) be independent non-negative random variables that are also independent of the default models and let \(e_1,\ldots,e_N\) be positive constants. 
     If \(G_{D_n,Z}(s) \geq G_{D_n',Z'}(s)\) for all \(n\) and \(s,\) then
    \begin{align}\label{eqpropcomBMMs}
        \sum_{n=1}^N e_n \delta_n D_n \leq_{cx} \sum_{n=1}^N e_n \delta_n D_n'.
    \end{align}
\end{theorem}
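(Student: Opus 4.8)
The plan is to realize both models on one probability space through a canonical one-factor representation, to interpolate between the two loss sums by switching the borrowers' default mechanisms one coordinate at a time, and to check that each single switch raises the loss in convex order by a direct computation. Concretely, I would introduce independent uniforms $U,W_1,\dots,W_N$ on $(0,1)$, independent of $\delta_1,\dots,\delta_N$, and set $q_n(t):=p_{D_n}(F_Z^{-1}(t))$ and $q_n'(t):=p_{D_n'}(F_{Z'}^{-1}(t))$, which are increasing by Assumption \eqref{ass3} and satisfy $G_{D_n,Z}(s)=\int_0^s q_n$ and $G_{D_n',Z'}(s)=\int_0^s q_n'$ by \eqref{defcdpf}. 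Defining $D_n:=\ind{W_n\le q_n(U)}$ and $D_n':=\ind{W_n\le q_n'(U)}$, the $D_n$ are conditionally independent given $U$ with $P(D_n=1\mid U)=q_n(U)$, and since $F_Z^{-1}(U)\eqd Z$ the vector $(D_1,\dots,D_N)$ has exactly the law of the first siBMM; likewise for $(D_1',\dots,D_N')$. As $\delta$ is shared and independent, the two sums carry the correct marginal laws, so it suffices to establish \eqref{eqpropcomBMMs} for these representatives. Note that each $D_n$ and each $D_n'$ is increasing in $U$.

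Next I would telescope. Put $R^{(k)}:=\sum_{n\le k}e_n\delta_n D_n'+\sum_{n>k}e_n\delta_n D_n$, so that $R^{(0)}=\sum e_n\delta_n D_n$ and $R^{(N)}=\sum e_n\delta_n D_n'$; by transitivity of $\leq_{cx}$ it is enough to show $R^{(k-1)}\leq_{cx}R^{(k)}$ for every $k$. Writing $R^{(k-1)}=S+e_k\delta_k D_k$ and $R^{(k)}=S+e_k\delta_k D_k'$ with common part $S:=\sum_{n<k}e_n\delta_n D_n'+\sum_{n>k}e_n\delta_n D_n$, the decisive structural fact is that, conditionally on $(\delta,W_{-k})$, the variable $S$ is an increasing function of $U$, because every indicator entering $S$ is increasing in $U$ and every coefficient $e_n\delta_n$ is non-negative. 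For convex $\varphi$, conditioning on $(U,W_{-k},\delta)$ and integrating out $W_k$ gives
\[
\E[\varphi(R^{(k)})]-\E[\varphi(R^{(k-1)})]=\E\big[(q_k'(U)-q_k(U))\,\Delta\big],\qquad \Delta:=\varphi(S+e_k\delta_k)-\varphi(S).
\]
Convexity makes $s\mapsto\varphi(s+e_k\delta_k)-\varphi(s)$ increasing, hence $\Delta$ is increasing in $U$ for fixed $(\delta,W_{-k})$. With $h:=q_k'-q_k$ and $H(s):=\int_0^s h=G_{D_k',Z'}(s)-G_{D_k,Z}(s)\le0$ (the hypothesis), and $H(0)=H(1)=0$ since both models share the marginal $\pi_k$, integration by parts yields, for each fixed $(\delta,W_{-k})$,
\[
\int_0^1 h(u)\,\Delta(u)\,\de u=\big[\Delta H\big]_0^1-\int_0^1 H(u)\,\de\Delta(u)=-\int_0^1 H(u)\,\de\Delta(u)\ge0,
\]
because $H\le0$ and $\Delta$ is increasing. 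Taking expectations over $(\delta,W_{-k})$ gives $R^{(k-1)}\leq_{cx}R^{(k)}$, and chaining the $N$ steps proves \eqref{eqpropcomBMMs}.

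The hard part is the single-switch step and, specifically, pinning down the sign of $\E[(q_k'(U)-q_k(U))\Delta]$. This requires lining up two ingredients: that $\Delta$ is monotone in the common factor $U$, which forces the remaining borrowers' contribution $S$ to be increasing in $U$ and is exactly where the stochastic increasingness \eqref{ass3} is indispensable; and that the hypothesis $G_{D_k,Z}\ge G_{D_k',Z'}$ is precisely the statement $H\le0$ with vanishing endpoints, so that the boundary terms drop and the pairing of a non-positive antiderivative against an increasing integrand is correctly signed. Conditioning on $(\delta,W_{-k})$ at the outset is what reduces the whole estimate to this clean one-dimensional integral and keeps every quantity bounded; it also shows that the convex ordering is driven entirely by the dependence structure, consistent with the fact that a single borrower's loss has the same law in both models.
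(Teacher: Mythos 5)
Your proof is correct, and it takes a genuinely different route from the paper's. The paper proves Theorem \ref{propcomBMMs} by representing each siBMM as a copula-based threshold model via the distributional transform (Lemma \ref{correpBMM}), showing that the hypothesis \(G_{D_n,Z}\geq G_{D_n',Z'}\) translates into a pointwise ordering of the SI copulas \(C_{X_n,Y}\leq C_{X_n',Y'}\), and then invoking Proposition \ref{lemconord}, whose proof imports the supermodular comparison theorem for \(\conv\)-products of SI copulas from \cite{Ansari-Rueschendorf-2023} together with closure properties of \(\leq_{sm}\) from \cite{Shaked-Shantikumar-2007}. Your coupling \(D_n=\1_{\{W_n\leq q_n(U)\}}\), the coordinate-by-coordinate switch, and the one-dimensional integration-by-parts estimate replace all of that machinery with a self-contained elementary argument; in effect you re-derive, for the particular supermodular function \(\varphi(x_1+\cdots+x_N)\), exactly the comparison result the paper cites, since coordinate switching is the standard engine behind such theorems. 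Your version makes the role of each hypothesis transparent: Assumption \eqref{ass3} (for both models, and only for the coordinates not being switched) is what makes \(S\) increasing in \(U\), while the \(G\)-ordering plus the shared marginals \(\pi_k\) give \(H\leq 0\) with \(H(0)=H(1)=0\), killing the boundary terms. What the paper's route buys is the stronger conclusion \((D_1,\ldots,D_N)\leq_{sm}(D_1',\ldots,D_N')\), which it exploits elsewhere (e.g.\ the ordering of joint default probabilities in Remark \ref{remmaithe}), and reuse of the threshold-model correspondence developed in Section \ref{secmain} anyway; note, though, that your switch argument would yield the same strengthening if applied to a general supermodular \(\psi\) instead of \(\varphi\) of the sum, since \(\Delta\) is then increasing in \(U\) by supermodularity. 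Two small points worth making explicit in a final write-up: the integration by parts is legitimate because \(H\) is continuous and \(\Delta\) is a bounded increasing step function of \(u\) (alternatively, Fubini applied to \(\int_0^1 h(u)\Delta(u)\de u\) avoids Stieltjes integration entirely), and the identity \(H(1)=0\) uses that both models carry the same \(\pi_k\), which is implicit in the theorem's statement and equally needed in the paper's proof.
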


%The proof of Theorem \ref{propcomBMMs} relies on a representation of siBMMs through copula-based threshold models, which we provide in Appendix \ref{appendA}. 
As a consequence of the above theorem, we establish in Section \ref{seccompres} robustness results for credit risk models and determine in Theorem \ref{mainthe} lower and upper bounds in convex order for classes of credit portfolios models.% with respect to the convex order.

\begin{remark}\label{remmaithe}
\begin{enumerate}[(a)]
\item \label{remmaithea} The pointwise comparison of default integral functions in Theorem \ref{propcomBMMs} allows a simple interpretation: Smaller values of \(G_{D_n,Z}\) indicate more positive dependence between \(D_n\) and \(Z.\) More precisely,
if \(Z\) has the same distribution as \(Z',\) then \(G_{D_n,Z}(s) \geq G_{D_n',Z'}(s)\) for all \(s\) is equivalent to \(\mathbb{P}(D_n >  x, Z>z) \leq \mathbb{P}(D_n' > x,Z'>z)\) for all $x,z$ or to \(\mathbb{P}(D_n=1\mid Z>z) \leq \mathbb{P}(D_n' =1\mid Z'>z)\) for all \(z,\) i.e., default is more likely in the model \((D_n',Z')\) when the risk factor exceeds fixed values. The above relation is well studied in the literature and known as \emph{upper orthant order} or \emph{more positive quadrant dependence order}; see e.g. \cite[Chapter 6]{Rueschendorf-2013} or \cite{Joe-1997}. 
\item The proof of Theorem \ref{propcomBMMs} is based on a representation of siBBMs through threshold models (see Appendix \ref{appendA}) and on a supermodular comparison result for factor models (see \cite[Corollary 4.1]{Ansari-Rueschendorf-2023}). Since the supermodular order implies the upper orthant order, see \cite[Figure 3.1]{Mueller-Stoyan-2002}, we also obtain under the assumptions of Theorem \ref{propcomBMMs} that joint default probabilities are ordered, i.e., \(\mathbb{P}(D_n = 1, n\in M) \leq \mathbb{P}(D_n'=1, n\in M)\) for all \(M\subseteq \{1,\ldots,N\}.\)
    \item The default integral function \(G_{D_n,Z}\) is linear whenever \(D_n\) and \(Z\) are independent. In particular, under Assumption \eqref{ass3}, \(G_{D_n,Z}(s) = s \pi_n\) for some \(s\in (0,1)\) is equivalent to independence of \(D_n\) and \(Z.\) Further, if \(F_Z\) is continuous, then \(G_{D_n,Z}(s) = (s-1+\pi_{n})\1_{\{s \geq 1-\pi_{n}\}}\) whenever \(D_n\) and \(Z\) are comonotonic. In particular, in this case,
    \(G_{D_n,Z}(s) = 0\) for \(s = 1-\pi_n\)
    %\(G_{D_n,Z}(s) = s - 1 + \pi_n\) for some \(s\in [1-\pi_n,1)\) 
    is equivalent to comonotonicity of \(D_n\) and \(Z.\)
\end{enumerate}
\end{remark}

The focus of this paper is on siBMMs, i.e. under Assumption \eqref{ass3} that default probabilities are stochastically increasing in the common risk factor. 
However, if this assumption is dropped, it remains possible to upper bound the losses of a credit portfolio in convex order by leveraging the concept of increasing rearrangements. This leads to the following theorem, which asserts that any BMM can be dominated by a siBMM.
%This result provides an improvement over the standard comonotonic bound. 
Therefore, let \(f\colon (0,1)\to \R\) be an integrable function. The \emph{increasing rearrangement} of \(f\) is the (essentially with respect to the Lebesgue measure uniquely determined) increasing function \(f^*\) which satisfies \(\int_x^1 f(t) \de t \leq \int_x^1 f^*(t) \de t\) for all \(x\in (0,1)\) with equality for \(x=0,\) see \cite[Theorem 3.13]{Rueschendorf-2013}. We denote by \(\eqd\) equality in distribution.

\begin{theorem}[Domination of BMMs by siBMMs]\label{thebmmsiBMM}~\\
    For fixed \(\pi_n\in (0,1),\) \(n\in \{1,\ldots,N\},\) let \((D_1,\ldots,D_N,Z)\) be a BMM and \((D_1',\ldots,D_N',Z')\) be a siBMM with \(Z\eqd Z'.\)
    Let \(\delta_1,\ldots,\delta_N\) be independent non-negative random variables that are also independent of the default models and let \(e_1,\ldots,e_N\) be positive constants. 
    If \(u \mapsto p_{D_n'} \circ F_{Z'}^{-1}(u)\) is the increasing rearrangement of \(u \mapsto p_{D_n} \circ F_{Z}^{-1}(u),\) then 
    \begin{align}\label{eqthebmmsiBMM}
        \sum_{n=1}^N e_n \delta_n D_n \leq_{cx} \sum_{n=1}^N e_n \delta_n D_n'.
    \end{align}
\end{theorem}

\begin{remark}%\begin{enumerate}[(a)]
    %\item 
    Under Assumption \eqref{ass3}, the function \(u \mapsto p_{D_n} \circ F_{Z}^{-1}(u)\) in Theorem \ref{thebmmsiBMM} is increasing and thus coincides with its increasing rearrangement \(u \mapsto p_{D_n'} \circ F_{Z'}^{-1}(u).\) Further, due to Theorem \ref{propcomBMMs}, the upper bound in \eqref{eqthebmmsiBMM} is in convex order between the independent loss \(L^\perp\) and the comonotonic loss \(L^c,\) which are given by
    \begin{align}\label{defstandbou}
    L^\perp := \sum_{n=1}^N e_n \delta_n D_n^\perp \quad \text{and} \quad L^c := \sum_{i=1}^N e_n \delta_n D_n^c,
\end{align}
where \(D_1^\perp,\ldots,D_N^\perp\) are independent and \(D_1^c,\ldots,D_N^c\) are comonotonic Bernoulli-distributed random variables with \(\mathbb{P}(D_n^\perp = 1) = \mathbb{P}(D_n^c = 1) = \pi_n\) for all \(n\in \{1,\ldots,N\}.\)
\end{remark}

\section[Robustness Results]{Robustness Results for Credit Risk Models}\label{seccompres}

In this section, we 
determine lower and upper risk bounds for classes of credit risk models under parameter or model uncertainty. Additionally, we examine simple conditions under which these improved bounds are attained.

For a vector of borrower specific default probabilities \(\pi = (\pi_n)_{1\leq i \leq N},\) 
%and for a univariate distribution function \(F,\) 
denote by 
\begin{align}
    \cM^\pi = \{(D_1,\ldots,D_N,Z) ~\text{satisfying conditions \eqref{ass1}-\eqref{ass3}}
    %,  F_Z = F
    \}
\end{align} 
the class of siBMMs.
%, where the common factor variable \(mixing variable \(Z\) has distribution function \(F.\) 
%Further, for independent random variables \(\{L_i\}_i\) and \(\{e_i\}_i\) representing losses given default and for a non-empty set \(M\subseteq \cM_F^\pp\) of Bernoulli mixture models, denote by
%\begin{align}
%    \cL^\pp_M := \{L = \sum_{i=1}^n L_ie_i\delta_i \mid \{L_i\}_i, \{e_i\}_i \text{ independent of } (\delta_1,\ldots,\delta_m,Z)\in M\}
%\end{align}
%the set of loss variable under default models in \(M.\)
Denote by \(\cF_{icx}\) the class of increasing convex functions mapping from \([0,1]\) to \(\R.\) For \(\alpha\in [0,1],\) we consider the subclass
\begin{align}
    \cF_{icx}^\alpha := \{f \in \cF_{icx} \mid   f(0)= 0,  f(1) = \alpha, ||f||_L\leq 1\}
\end{align}
of increasing convex functions 
%\textcolor{blue}{Hier sind das doch erstmal allgemeine Functionen mit den genannten Eigenschaft und nicht unbedingt default integral functions, oder?} \textcolor{red}{ja, und der Bezug zu den default integral functions sollte aber bald hergestellt werden} 
mapping from \((0,0)\) to \((1,\alpha)\) with Lipschitz constant \(||\cdot||_L\) being not larger than \(1.\) For an arbitrary family \(f=(f_t)_{t\in I}\subset \cF_{icx}^\alpha,\) we denote by
\begin{align*}
   \overline{\mathop{co}}(f) &:= \inf\{ g \mid g~\text{convex}, g(y)\geq \sup_{t\in I} f_t(y) ~\forall y\},\\
 \underline{\mathop{co}}(f)&:= \sup\{ g \mid g~\text{convex}, g(y)\leq \inf_{t\in I} f_t(y) ~\forall y\},
\end{align*}
the least convex majorant and greatest convex minorant. It is straightforward to verify that also \(\underline{\mathop{co}}(f)\) and \(\overline{\mathop{co}}(f)\) are in \(\cF_{icx}^\alpha.\)
We make use of the function class \(\cF_{icx}^\alpha\) to describe default integral functions as defined in \eqref{defcdpf}. 
It can easily be seen that any default integral function associated with an siBMM is in the class \(\cF_{icx}^\alpha\) for \(\alpha = \pi_n.\) Vice versa, for every function \(g\in \cF_{icx}^\alpha,\) there exists a siBMM with \(\pi_n = \alpha\) having default integral function \(G_{D_n,Z} = g.\)

The following theorem addresses classes of siBMMs and determines improved lower and upper bounds in convex order for the associated credit portfolio losses. These improvements are relative to the standard bounds for sums of positively dependent random variables within the same Fr\'{e}chet class, given by the independent loss \(L^\perp\) and the comonotonic loss \(L^c\).
%The lower bound refines the portfolio loss \(L^\perp\) under the assumption of independent default events, while the upper bound improves upon the comonotonic portfolio loss \(L^c\), where default events are perfectly positively dependent.

\begin{theorem}[Bounds in convex order for classes of siBMMs]\label{mainthe}~\\
    For fixed default probabilities \(\pi = (\pi_1,\ldots,\pi_N),\) let \(\{(D_{1}^x,\ldots,D_{N}^x,Z^x)\}_{x\in I}\)
    %For a family \((F_x)_{x\in I}\) of distribution functions, let \((D_{1}^x,\ldots,D_{N}^x,Z^x)\in \cM_{F_x}^\pi,\) \(x\in I,\) 
    be a family of siBMMs with \((D_{1}^x,\ldots,D_{N}^x,Z^x)\in \cM^\pi\), with default integral functions \(G_{D^x_{n},Z^x}\), and with losses \(L^x := \sum_{n=1}^N e_n\delta_n D_{n}^x\) for all \(x\in I.\)  
    %For fixed default probabilities \(\pi_n,\) \(n\in \{1,\ldots,N\},\)     let \(\{D_{1,x},\ldots,D_{N,x},Z\}_{x\in I} \subseteq \cM_F^\pp\) be a family of siBMMs with default integral functions \(G_{D_{n,x},Z}\) and losses \(L_x := \sum_{n=1}^N e_n\delta_n D_{n,x}\) for \(x\in I.\) 
    Consider siBMMs \((\underline{D}_1,\ldots,\underline{D}_N,\underline{Z})\) and \((\overline{D}_1,\ldots,\overline{D}_N,\overline{Z})\) in \(\cM^\pi\) specified by the %default probabilities \(\pi_1,\ldots,\pi_n\) and 
    default integral functions
    \begin{align}\label{eqdif}
    \underline{G}_{n} &:= \overline{\mathop{co}}\left((G_{D_{n}^x,Z^x})_{x\in I}\right) \quad \text{and} \quad \overline{G}_{n} := \underline{\mathop{co}}\left((G_{D_{n}^x,Z^x})_{x\in I}\right),
    \end{align}
   % for all \(s\in [0,1].\)
    respectively. Then, we have
    \begin{align}
        L^{\perp}\leq_{cx} \underline{L} \leq_{cx} L^x \leq_{cx} \overline{L} \leq_{cx} L^c \quad \text{for all } x\in I,
    \end{align}
    where \(\underline{L} := \sum_{n=1}^N e_n \delta_n \underline{D}_n\) and \(\overline{L} := \sum_{n=1}^N e_n \delta_n \overline{D}_n.\)
    %\textcolor{red}{(The notation $^c$ for the comonotonic bounds has not been introduced before.)}:
\end{theorem}

\begin{remark}\label{rembco}
    \begin{enumerate}[(a)]
    \item Theorem \ref{mainthe} imposes no assumption on the mixing variables \(Z^x\). Instead, the bounds are derived solely based on the least convex majorant and greatest convex minorant of the family of default integral functions, ensuring broad applicability and minimal structural constraints. The functions \(\underline{G}_n\) and \(\overline{G}_n\) in \eqref{eqdif} again define siBMMs, which yield the bounds \(\underline{L}\) and \(\overline{L}\) for \(L^x\) in convex order. Note that \(\underline{G}_n \geq \overline{G}_n\) pointwise, in accordance with Remark \ref{remmaithe}\eqref{remmaithea} where pointwise smaller default integral functions indicate stronger positive dependencies. 
    \item \label{rembco1} The lower bound \(\underline{L}\) (upper bound \(\overline{L}\)) in Theorem \ref{mainthe} is sharp if there exists a sequence \((x_m)_{m\in \N},\) \(x_m\in I\) for all \(m,\) of siBMMs such that \(G_{D_{n}^{x_m},Z^{x_m}}(s) \to \underline{G}_{n}(s)\) (\(G_{D_{n}^{x_m},Z^{x_m}}(s) \to \overline{G}_{n}(s)\)) for all \(s\) as \(m\to \infty.\) In this case, the derivatives of the convex default integral functions converge pointwise (outside a Lebesgue-null set) so that the assertion follows from the dominated convergence theorem. 
    The conditions are satisfied, for instance, in a parametric subfamily of Gaussian or Clayton copula-based threshold models, where the underlying parameters lie within specified intervals; see also Examples \ref{exBMM} and \ref{excomthr}.
        \item If in a BMM \((D_1,\ldots,D_N,Z)\) the risk factor \(Z\) is independent of the default variables \(D_1,\ldots,D_N,\) then we have \(\mathbb{P}(D_n=1\mid Z=F_Z^{-1}(t)) =  \pi_n\) and thus \(G_{n,Z}(t) = t \pi_n\) for all \(t\in [0,1]\) and for all \(n.\) 
        Hence, the lower bound \(\underline{L}\) improves the loss \(L^\perp\) of the independence model in convex order if and only if there exists \(n\in \{1,\ldots,N\}\) and \(s\in (0,1)\) such that \(\underline{G}_{n,Z}(s)\ne s \pi_n .\)  
        Similarly, if \(D_1,\ldots,D_N,Z\) are comonotonic and \(F_Z\) is continuous, then \(\mathbb{P}(D_n=1\mid Z=F_Z^{-1}(t)) =  \1_{\{t \geq 1- \pi_n\}}\) and thus \(G_{n,Z}(t) = (t-1+\pi_n) \1_{\{t\geq 1-\pi_n\}}\) for all \(t\) and for all \(n.\) Hence,
        the upper bound \(\overline{L}\) improves the comonotonic bound \(L^c\) in convex order if and only if there exists \(n\in \{1\ldots,N\}\) and \(t\in (0,1)\) such that \(\overline{G}_{n,Z}(t) \ne (t-1+\pi_n) \1_{\{t\geq 1-\pi_n\}}.\)
        \item For a family \(f = \{f_t\}_{t\in I}\) of functions in \(\cF_{icx}^\alpha,\) denote by
\begin{align}\label{genF}
    \mathfrak{c}(f) := \left\{ h \in \cF_{icx}^\alpha \mid \underline{\mathop{co}}(f)(s) \leq h(s) \leq \overline{\mathop{co}}(f)(s) ~ \text{for all } s\in (0,1)\right\}
\end{align}
the class of increasing convex functions generated by \(f.\) 
The functions \(g_1(s) = \alpha s\) and \(g_2(s) = (s-1+\alpha)\1_{\{s\geq 1-\alpha\}}\) generate the class \(\cF_{icx}^\alpha,\) i.e., \(\mathfrak{c}(\{g_1,g_2\}) = \cF_{icx}^{\alpha}.\) 
The right-hand plot in Figure \ref{fig:transformed cond PDs} illustrates for \(\alpha = 0.2\) the classes \(\mathfrak{c}(\{g_1,g_2\})\) and \(\mathfrak{c}(\{h_1,h_2\})\) as the area between the respective functions, where \(h_1(s) = 1/12 s^2 + 7/60 s\) and \(h_2(s) = 0.2 s^3.\) 
    \end{enumerate}
\end{remark}

To quantify portfolio losses in terms of risk measures, we consider convex, law-invariant risk measures defined on the space of integrable random variables on \((\Omega,\cA,P).\) It is well known that convex law-invariant risk measures are consistent with the convex order if the underlying probability space is atomless; see \cite{Bauerle-2006,Rueschendorf-2006,Schachermayer-2006}. For an overview of risk measures, we refer to \cite[Chapter 4]{Follmer-Schied-2016}.
The following result is a direct consequence of Theorem \ref{mainthe}.

\begin{corollary}[Lower and upper credit portfolio risk bounds in siBMMs]\label{cormaithe}~\\
    Let \(\psi\) be a law-invariant, convex risk measure on \((\Omega,\cA,P)\).
    Under the assumptions of Theorem \ref{mainthe}, we have for all \(x\in I\) that
    \begin{align}
        \psi(L^\perp) \leq \psi(\underline{L}) \leq \psi(L^x) \leq \psi(\overline{L}) \leq \psi(L^c).
    \end{align}
\end{corollary}
An important example of a law-invariant, convex risk measure is the AVaR at level \(\alpha\in (0,1),\) defined for an integrable random variable \(S\)  by
\begin{align}
    \avar_\alpha(S) := \frac 1 \alpha \int_{1-\alpha}^1 F_S^{-1}(t) \de t.
\end{align}
In the Basel III and Solvency II frameworks, the AVaR is a benchmark risk measure in the internal models for market risk. It was introduced as a replacement for VaR in order to better capture tail risks in the revised approach to risk measurement implemented in \cite{Basel2013}. Note that the VaR is not a convex risk measure, but is still used as the benchmark under the Internal Ratings Based (IRB) approach of the Basel regulatory framework for the calculation of minimum capital requirements for credit portfolios. In the following section, we apply our robustness results to data and study improved bounds for the AVaR for families of siBMMs under parameter uncertainty and for classes of siBMMs under model uncertainty.

\section{Applications}\label{secappl}

We illustrate our findings on the robustness of credit portfolio models with a simulation study and a real data example.

%In this sequel, we conduct a simulation study and apply our results to real data to illustrate and discuss our findings on robustness for credit portfolio models.
%In this section, we illustrate  the robustness results and provide a simulation study and determine and discuss the bounds in robust credit risk models...

\subsection{Simulation study}

%As an illustrative example of our robustness results for Bernoulli mixture models, we 
We perform a simulation study considering four different classes of threshold models.
In threshold models, default is defined as the event where the value of a financial asset drops below a specified threshold, typically corresponding to the company's liability value; see Appendix \ref{appendA}. 
As a standard threshold model, we consider the Gaussian copula model which, however, has the drawback that it does not admit tail dependencies. Given the strong tail dependencies commonly observed in  
financial assets, copula models that account for such dependencies are generally preferable. For this reason, we specifically examine the Clayton copula family, which belongs to the class of Archimedean copulas, along with its survival copulas. 
%Note that the Gumbel-Hougaard family also belongs to the class of extreme value copulas.
As a fourth class, we examine a hybrid approach that combines Gaussian and Clayton copula models. We then compare the bounds derived under model uncertainty for this hybrid model with the individual bounds obtained for each copula model under parameter uncertainty.

Examples of the Gaussian copula model include the classical industry models of KMV, see \cite{Kealhofer2001}, as well as the (default-mode) CreditMetrics model, see \cite{CreditMetrics}, which also allows for an extension to a rating transition model. The Clayton copula model has been studied, for instance, in \cite[Example 8.22]{Embrechts-2015}.

Our analysis is based on an application of Corollary \ref{cormaithe} in order to determine uncertainty intervals for the AVaR of credit portfolio losses in the respective classes of siBBMs. 
To this end, we compare the associated default integral functions \(G_{D_n,Z}\) pointwise. 
By Proposition \ref{lemconBMM} and Lemma \ref{correpBMM}, every siBMM \((D_1,\ldots,D_N,Z)\) can be identified with a copula-based threshold model \(D_n = \1_{\{X_n \leq c_n\}}\), where the copula of \(C_{X_n,Y}\) is SI, \(n\in \{1,\ldots,N\}\); see Appendix \ref{secmain} for all details. If the distribution function of the common risk factor \(Z\) is continuous, we obtain from \eqref{survcop} the representation
\begin{align}\label{repdefintfun}
    G_{D_n,Z}(t) = \pi_n - C_{X_n,Y}(\pi_n,1-t)
\end{align}
for the respective threshold model.

In the following example, we consider copula-based threshold models, where 
%the latent asset return \(X_n\) and the common risk factor \(Y\) follow a standard normal distribution\footnote{Any other continuous marginal distribution function can be achieved by a quantile transformation.}. 
the dependence structure of \((X_n,Y)\) is described by a Gaussian, a Clayton, and a survival Clayton copula, respectively. In each case, the underlying copula parameter is subject to uncertainty.

\begin{example}[siBMMs under parameter uncertainty]\label{exthreshmod}~\\
We consider three siBMMs implied by copula-based threshold models (due to Proposition \ref{lemconBMM}) under parameter uncertainty and apply Corollary \ref{cormaithe}
%and a class of threshold models under model uncertainty. Due to Corollary \ref{cormaithe}, we 
to obtain the minimal and maximal credit portfolio risks for the respective family of models.  The probability of default is set to \(\pi_n = 0.02\) for all \(n\in \{1,\ldots,N\}.\)
\begin{figure}[t]
    \centering
    \includegraphics[scale = 0.5, trim={0 30 0 15},clip]{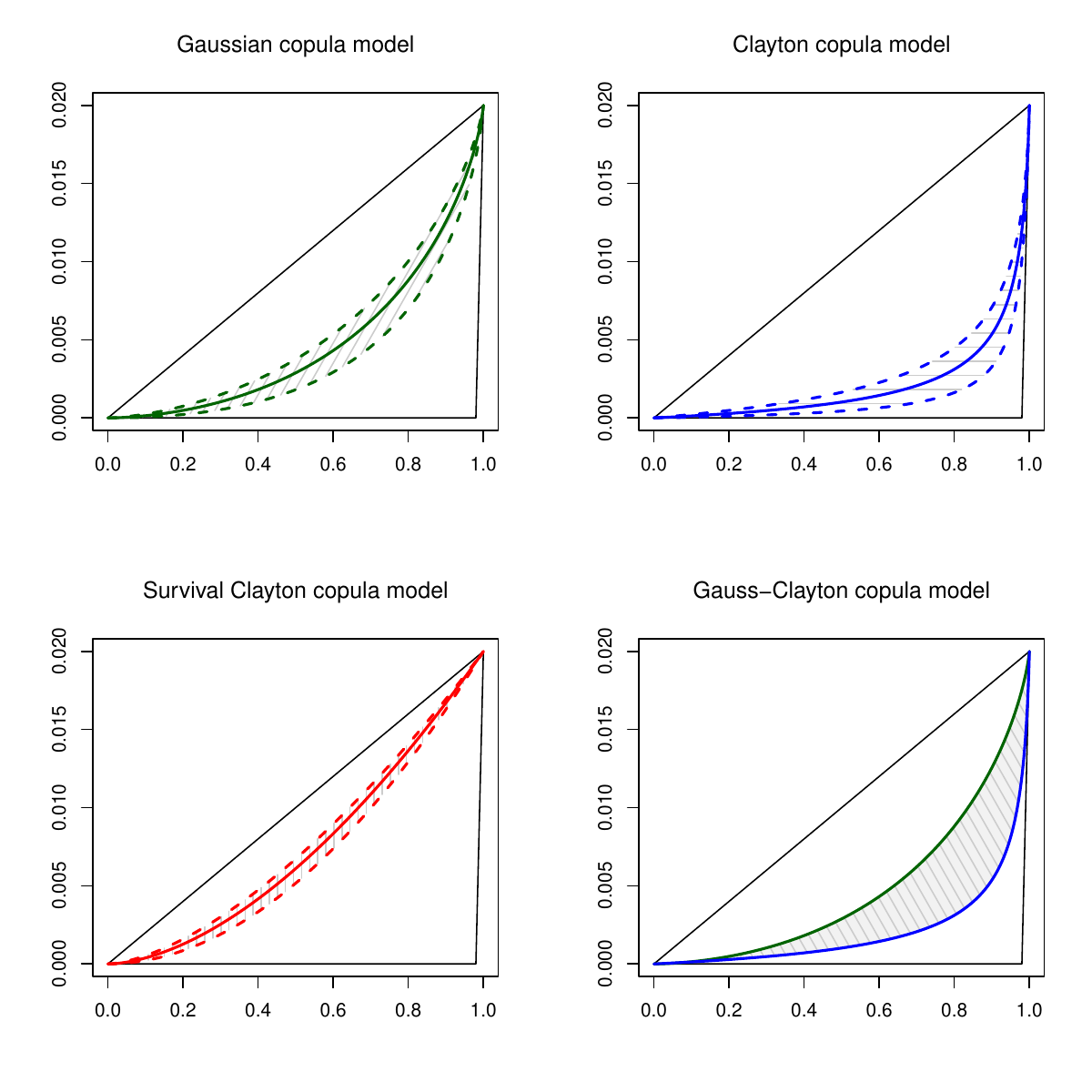}
    \caption{
    Default integral functions \(G_{D_n,Z}\) of the threshold model classes considered in Examples \ref{exthreshmod} and \ref{exthreshmod2} under parameter and model uncertainty: Gaussian copula model (green), Clayton copula model (blue), and survival Clayton copula model (red). The solid lines are obtained from asset correlations estimated by the IRB formula, where the dashed lines \(\overline{G}_n^{Ga}\), \(\underline{G}_n^{Ga}\), \(\overline{G}_n^{Cl}\), \(\underline{G}_n^{Cl}\), \(\overline{G}_n^{sCl}\), \(\underline{G}_n^{sCl}\) correspond to the asset correlation bounds  of the IRB formula \eqref{equ IRB asset correlation}.
    }
    \label{fig:enter-label}
\end{figure}
\begin{enumerate}[(a)]
    \item \label{exthreshmoda} Gaussian copula model: For \(n\in \{1,\ldots,N\},\) we consider a family \(\{D_n^{\rho_n} = \1_{\{X_n\leq c_n\}}\}_{\rho_n \in I_n}\) of threshold models, where \(X_n\) follows the Merton model \eqref{Vasicek} with unspecified asset correlation \(\rho_n\in I_n :=[0.12,0.24]\). The bounds of the interval \(I_n\) for \(\rho_n\) are chosen as the standard bounds \(0.12\) and \(0.24\) in the IRB approach (see \cite{IRB2005,Basel2011}), where the asset correlation is determined as
\begin{align}\label{equ IRB asset correlation}
    \rho_n=0.12\cdot \frac{1-e^{-50\cdot \pi_n}}{1-e^{-50}}+0.24\cdot \left(1-\frac{1-e^{-50\cdot \pi_n}}{1-e^{-50}}\right),
\end{align}
which is a function of the borrower's unconditional default probability $\pi_n$.
    The joint distribution of \(X_n\) and the latent risk factor \(Y\) is bivariate normal with correlation \(\sqrt{\rho_n}.\) 
    Hence, the associated copula is the Gaussian copula with parameter \(\sqrt{\rho_n}.\) The threshold is given as \(c_n = \Phi^{-1}(\pi_n);\) see \eqref{eqthreshold}. 
    According to \eqref{equ IRB asset correlation} and Example \ref{exBMM} \eqref{exBMMa}, the implied family of siBMMs \(\{(D_1^{\rho_1},\ldots,D_N^{\rho_N},Z)\}_{\rho_1,\ldots,\rho_N}\) is specified by the default integral functions
    \begin{align}\label{defintfunIRBGauss}
    \begin{split}
        G_{D_n^{\rho_n},Z}(s) 
        %= s - \widehat{C}_{X_n,Y}(1-\pi_n,s) 
        = \pi_n - C_{\sqrt{\rho_n}}^{Ga}(\pi_n,1-s) 
        = \pi_n - \Phi_{\sqrt{\rho_n}}(\Phi^{-1}(\pi_n),\Phi^{-1}(1-s)) ,
    \end{split}
    \end{align}
    where \(\Phi_\rho\) is the standardized bivariate normal cdf with correlation \(\rho\) and \(\Phi\) is the univariate standard normal cdf. 
    Since the Gaussian copula family \((C^{Ga}_\rho)_{\rho\in [-1,1]}\) is increasing in \(\rho\) with respect to the pointwise order of copulas and since Gaussian copulas with non-negative parameter are SI, \(G_{D_n^{\rho_n},Z}(s)\) is decreasing in \(\rho_n\) in the pointwise order. Hence, the default integral bounds due to Theorem \ref{mainthe} are given by
    \begin{align*}
        \overline{G}^{Ga}_n(s) &:= \underline{\mathop{co}}\left((G_{D_n^{\rho_n},Z})_{\rho_n\in I})\right)(s) = G_{D_n^{0.24},Z}(s) = \pi_n - \Phi_{\sqrt{0.24}}(\Phi^{-1}(\pi_n),\Phi^{-1}(1-s)),
        %s - \Phi\left(\frac{\Phi^{-1}(1-\pi_n) - \sqrt{0.24} \, \Phi^{-1}(s)}{\sqrt{1-0.24}}\right), 
        \\ 
        \underline{G}^{Ga}_n(s) &:= \overline{\mathop{co}}\left((G_{D_n^{\rho_n},Z})_{\rho_n\in I})\right)(s) = G_{D_n^{0.12},Z}(s) = \pi_n - \Phi_{\sqrt{0.12}}(\Phi^{-1}(\pi_n),\Phi^{-1}(1-s)),
        %s - \Phi\left(\frac{\Phi^{-1}(1-\pi_n) - \sqrt{0.12} \, \Phi^{-1}(s)}{\sqrt{1-0.12}}\right),
    \end{align*}
    see also Remark \ref{rembco}\eqref{rembco1}.
    The upper left plot in Figure \ref{fig:enter-label} illustrates the range of the default integral functions in the Gaussian threshold model under parameter uncertainty, where, for all \(n\in \{1,\ldots,N\},\) the lower (upper) dashed graph corresponds to the default integral bound \(\overline{G}^{Ga}_n\) (\(\underline{G}^{Ga}_n\)). 
    The solid line indicates the default integral function in the IRB approach for the asset correlation \(\rho_n \approx 0.165\) obtained from formula \eqref{equ IRB asset correlation} for our choice \(\pi_n = 0.02.\)
    %\textcolor{blue}{vorher noch allgemein ergänzen, wie man von einer default integral function auf \(p_{D_n}\) kommt?}
    \item \label{exthreshmodb} Clayton copula model: Analogous to \eqref{exthreshmoda}, we consider for each \(n\) a family \(\{D_n^{\theta_n} = \1_{\{X_n\leq c_n\}}\}_{\theta_n \in I}\) of threshold models, where now \(X_n\) depends on the latent factor \(Y\)  via the Clayton copula model \eqref{Claytoncopmod} with unspecified dependence parameter \(\theta_n\in I_n :=[\underline{\theta},\overline{\theta}].\) For reasons of comparability, the interval bounds \(\underline{\theta} := 4\pi \arcsin(0.12)/(1-2\pi \arcsin(0.12))\) and \(\overline{\theta} := 4\pi \arcsin(0.24)/(1-2\pi \arcsin(0.24))\) are chosen as the parameters of the Clayton copula having the same Kendall's tau value as the Gaussian copula with the parameter bounds from the IRB formula as discussed in part \eqref{exthreshmoda}. Since copulas have uniform marginals, the threshold level is given by \(c_n = \pi_n.\) The default integral functions of the implied family \(\{(D_1^{\theta_1},\ldots,D_N^{\theta_N},Z)\}_{\theta_1,\ldots,\theta_N}\) of siBMMs are given by 
    \begin{align}\label{defintfunCl}
        G_{D_n^{\theta_n},Z}(s) = \pi_n - C_{\theta_n}^{Cl}(\pi_n, 1-s),
    \end{align}
    using \eqref{repdefintfun}.
    Since the Clayton copula family is increasing in \(\theta\) with respect to the pointwise order and since Clayton copulas are SI for a non-negative parameter, we obtain the following sharp lower and upper default integral bounds: 
    \begin{align*}
        \overline{G}^{Cl}_n(s) &:= \underline{\mathop{co}}\big((G_{D_n^{\theta_n},Z})_{\rho_n\in I})\big)(s) 
        = G_{D_n^{\overline{\theta}},Z}(s) 
        = \pi_n - \left(\pi_n^{-\overline{\theta}}+(1-s)^{-\overline{\theta}}-1\right)^{-1 / \overline{\theta}},\\ 
        \underline{G}^{Cl}_n(s) &:= \overline{\mathop{co}}\left((G_{D_n^{\rho_n},Z})_{\rho_n\in I})\right)(s) = G_{D_n^{\underline{\theta}},Z}(s) 
        = \pi_n - \big(\pi_n^{-\underline{\theta}}+(1-s)^{-\underline{\theta}}-1\big)^{-1 / \underline{\theta}}.
    \end{align*}
    The upper right plot in Figure \ref{fig:enter-label} illustrates the range of default integral functions in the Clayton threshold model under parameter uncertainty, where, for all \(n\in \{1,\ldots,N\},\) the lower (upper) dashed graph corresponds to the default integral bound \(\overline{G}^{Cl}_n\) (\(\underline{G}^{Cl}_n\)). 
    The solid line indicates the default integral function for \(\theta_n \approx 0.723\) for which the Clayton copula has the same Kendall's tau value as the Gaussian copula in \eqref{exthreshmoda} obtained from the IRB formula.
    \item \label{exthreshmodc} Survival Clayton copula threshold model: We consider the setting in \eqref{exthreshmodb}, where we now replace the Clayton copula family in \eqref{defintfunCl} by the survival Clayton copula family. Noting that Kendall's tau value and the SI property are invariant under taking the survival copula transformation \eqref{survcop}, we obtain the lower and upper default integral bounds 
    \begin{align*}
        \overline{G}^{sCl}_n(s) &:= \underline{\mathop{co}}\big((G_{D_n^{\theta_n},Z})_{\rho_n\in I})\big)(s) = G_{D_n^{\overline{\theta}},Z}(s) 
        = s - \left((1-\pi_n)^{-\overline{\theta}}+s^{-\overline{\theta}}-1\right)^{-1 / \overline{\theta}},\\ 
        %\label{defintfunIRBsCl2}
        \underline{G}^{sCl}_n(s) &:= \overline{\mathop{co}}\big((G_{D_n^{\rho_n},Z})_{\rho_n\in I})\big)(s) = G_{D_n^{\underline{\theta}},Z}(s) 
        = s - \left((1-\pi_n)^{-\underline{\theta}}+s^{-\underline{\theta}}-1\right)^{-1 / \underline{\theta}}.
    \end{align*}
    The lower left plot in Figure \ref{fig:enter-label} illustrates the range of default integral functions in the survival Clayton threshold model under parameter uncertainty. We observe that the range of default integral functions in this setting significantly differs from the Clayton copula setting because we now have upper instead of lower tail dependencies.
\end{enumerate}
\end{example}

Tables \ref{tab:avar1} and \ref{tab:avar2} highlight the significant impact of accounting for lower tail dependencies in a credit portfolio comprising \(N = 1.000\) loans, each with a homogeneous default probability of \(\pi_n = 0.02\) for all \(n\in \{1,\ldots,N\}.\) As underlying convex risk measure, we choose the $\avar_{\alpha}$ with levels \(\alpha=95\%\) and \(\alpha=99\%.\) In the first scenario, we consider a deterministic loss given default of \(\delta_n \equiv 0.1\) (see Table \ref{tab:avar1}); in the second scenario, we choose a beta-distributed loss given default \(\delta_n\) with expectation \(\E[\delta_n] = 0.1\) and volatility \(\sqrt{\Var(\delta_n)} = 0.15\) (see Table \ref{tab:avar2}).
In the settings of the default models analysed in Example \ref{exthreshmod}, our simulations demonstrate that accurately capturing tail dependencies is more critical than accounting for parameter uncertainty within the individual models. This is evidenced by the non-overlapping uncertainty intervals observed in the Gaussian, Clayton, and survival Clayton copula settings. 
The lower bound of the uncertainty interval in the survival Clayton copula setting is only marginally higher than the risk %the worst case bound
\(\avar_\alpha^\perp := \avar_\alpha(L^\perp)\)
of the credit portfolio with independent default events. This can be attributed to the fact that  the lower tail dependencies in the survival Clayton copula are nearly independent (as evident from the lower left corner of the sample generated from the survival Clayton copula in Figure \ref{fig:sampleCopulaModels}).
In contrast, the Clayton copula setting captures lower tails with a notably strong positive dependence, effectively accounting for the pronounced co-movements in extreme events (see the lower left corner of the sample from the Clayton copula in Figure \ref{fig:sampleCopulaModels}). As a consequence, the upper bound for the Clayton copula is about two third of the worst-case comonotonic bound \(\avar_\alpha^c := \avar_\alpha(L^c).\)
The tail dependencies in the Gaussian setting lie somewhere in between, suggesting that portfolio risks modeled using the Gaussian model might initially appear to provide a realistic approach. However, as previously noted, Gaussian copulas typically underestimate tail dependencies in financial applications. Consequently, the Clayton copula setting may offer a more accurate reflection of the associated risks.

To find a balance between the Gaussian and the Clayton copula settings, we introduce a class of distributions whose tail dependencies lie between those of the specified Gaussian and Clayton copula. More precisely, we consider a class siBMMs, where the default integral functions are generated by a combination of Gaussian and Clayton copula threshold models.

\begin{example}[siBMMs under model uncertainty]\label{exthreshmod2}~\\
We consider a class of siBMMs specified by the default probabilities \(\pi_n = 0.02\) for all \(n\) and by the conditional default probability functions \(p_{D_n}\) associated 
%\textcolor{blue}{1-1 correspondence nochmal irgendwo kurz erläutern}
with the class \(\mathfrak{c}(\{G_n^{Ga},G_n^{Cl}\})\) (defined in \eqref{genF}) of default integral functions. Here \(G_n^{Ga}\) is the default integral function in \eqref{defintfunIRBGauss} associated with the Gaussian copula with parameter \(\rho_n \approx 0.165\) implied by the IRB formula. 
The function \(G_n^{Cl}\) is the default integral function in \eqref{defintfunCl} associated with the Clayton copula with parameter \(\theta_n \approx 0.723\, ,\) which accounts for tail dependencies.
The lower right-hand plot in Figure \ref{fig:enter-label} illustrates the class \(\mathfrak{c}(\{G_n^{Ga},G_n^{Cl}\})\) of default integral functions. We observe that the derivative of the function \(G_n^{Cl}\) (blue graph) is larger than that of \(G_n^{Ga}\) (green plot) for \(s\) close to \(1.\) This implies that the probability of default increases for large values of the common risk factor \(Z\), leading to a higher likelihood of simultaneous defaults.
\end{example}

\begin{table}[thb]
    \centering
    \scalebox{0.80}{
    \begin{tabular}{c||c|c||c|c||c|c||c|c||c|c}
        \toprule
        \multirow{2}{*}{\vspace{0.5em} Scenario 1} & \multicolumn{2}{c||}{Gaussian} & \multicolumn{2}{c||}{Clayton} & \multicolumn{2}{c||}{Surv. Clayton} & \multicolumn{2}{c||}{Gauss-Clayton} & \multicolumn{2}{c}{} \\
        
        & \(\underline{\avar}_\alpha\) & \(\overline{\avar}_\alpha\) & \(\underline{\avar}_\alpha\) & \(\overline{\avar}_\alpha\) & \(\underline{\avar}_\alpha\) & \(\overline{\avar}_\alpha\) & \(\underline{\avar}_\alpha\) & \(\overline{\avar}_\alpha\) & \(\avar_\alpha^\perp\) & \(\avar_\alpha^c\) \\
        \midrule
        \(\alpha=95\) & 0.80\% & 1.21\% & 2.02\% & 2.83\% & 0.37\% & 0.44\% & 0.95\% & 2.37\% & 0.30\% & 4.02\% \\
        \(\alpha=99\) & 1.17\% & 2.00\% & 4.45\% & 6.56\% & 0.42\% & 0.49\% & 1.47\% & 5.35\% & 0.33\% & 10.0\% \\
        \bottomrule
    \end{tabular}}
    \caption{Lower and upper bounds for the AVaR of the percentage losses in the robust credit risk models under parameter uncertainty and model uncertainty as specified in Examples \ref{exthreshmod} and \ref{exthreshmod2}. The number of loans is chosen as \(N = 1.000,\) the loss given default is \(\delta_n \equiv 0.1\) and the exposure at default as fraction of the total amount is \(e_n = 1/N\) for all \(n\in \{1,\ldots,N\}.\) The numbers are obtained by Monte Carlo simulations with a sample size of \(1.000.000.\)}
    \label{tab:avar1}
\end{table}

The risk bounds for the AVaR in the hybrid 'Gauss-Clayton' setting specified in Example \ref{exthreshmod2} are also illustrated in Tables \ref{tab:avar1} and \ref{tab:avar2} for the two different scenarios of loss given default as discussed previously. The risk bounds capture model uncertainty for the class of default integral functions lying between the Gaussian and Clayton case. Recall that Figure \ref{fig:enter-label} illustrates the classes of default integral functions considered in Examples \ref{exthreshmod} and \ref{exthreshmod2}.

\begin{table}[thb]
    \centering
    \scalebox{0.80}{
    \begin{tabular}{c||c|c||c|c||c|c||c|c||c|c}
        \toprule
        \multirow{2}{*}{\vspace{0.5em} Scenario 2} & \multicolumn{2}{c||}{Gaussian} & \multicolumn{2}{c||}{Clayton} & \multicolumn{2}{c||}{Surv. Clayton} & \multicolumn{2}{c||}{Gauss-Clayton} & \multicolumn{2}{c}{} \\
        
        & \(\underline{\avar}_\alpha\) & \(\overline{\avar}_\alpha\) & \(\underline{\avar}_\alpha\) & \(\overline{\avar}_\alpha\) & \(\underline{\avar}_\alpha\) & \(\overline{\avar}_\alpha\) & \(\underline{\avar}_\alpha\) & \(\overline{\avar}_\alpha\) & \(\avar_\alpha^\perp\) & \(\avar_\alpha^c\) \\
        \midrule
        \(\alpha=95\) & 0.83\% & 1.24\% & 2.03\% & 2.84\% & 0.46\% & 0.51\% & 0.99\% & 2.38\% & 0.39\% & 4.02\% \\
        \(\alpha=99\) & 1.22\% & 2.02\% & 4.46\% & 6.58\% & 0.54\% & 0.61\% & 1.50\% & 5.36\% & 0.46\% & 10.4\% \\
        \bottomrule
    \end{tabular}}
    \caption{Lower and upper bound for the AVaR of the percentage losses in the robust credit risk models under parameter uncertainty and model uncertainty as specified in Examples \ref{exthreshmod} and \ref{exthreshmod2}. The number of loans is chosen as \(N = 1.000,\) the loss given default \(\delta_n\) is \(beta(0.3,2.7)\)-distributed, and the exposure at default as fraction of the total amount is \(e_n = 1/N\) for all \(n\in \{1,\ldots,N\}.\) The numbers are obtained by Monte Carlo simulations with a sample size of \(1.000.000.\)}
    \label{tab:avar2}
\end{table}

\subsection{Real data example}\label{secreadatexa}

Next, we compare credit portfolio risk bounds derived from different copula models for realistic sovereign loan portfolios that we construct from publicly available data. Therefore, we extract information from the financial statements of the Inter American Development Bank (IDB) (see \cite{IDB2022}) 
and combine this with data on sovereign ratings provided by different rating agencies. Table \ref{tab IDB portfolio} summarizes the portfolio's exposures (in million USD) and corresponding borrower ratings. 
Further, we derive the sovereign default probabilities \(\pi_n\) from the rating transitions as published in \cite[Table 35]{S&P2022}. 
Again, we consider a deterministic loss given default rate of $\delta_n \equiv 10\%$ or a beta distributed loss given default \(\delta_n\) with mean \(10\%\) and volatility equal to \(15\%\). The relatively low loss given default reflects a high preferred creditor status (compare \cite{S&P2018}) whereby sovereigns tend to repay loans to supranational institutions before other creditors.
\begin{table}[tb]
    \centering
    \scalebox{0.65}{
    \begin{tabular}{r|l|r|l||r|r||r|r||r|r|}
    \toprule
    \multicolumn{10}{l}{\textbf{Panel B: IDB}}\\
    n & Country	& Amount & Rating & \(\pi_n\) [\%] & \(e_n\) [\%] & \(\underline{\rho}_n\) [\%] & \(\overline{\rho}_n\) [\%] & \(\underline{\theta}_n\) & \(\overline{\theta}_n\) \\
    \hline
    1  & Argentina           & 15548  & CCC-  &84.80  & 14.33  & 6.00  & 16.00  & 0.37  & 0.71 \\
    2  & Bahamas             & 725    & B+    & 1.46  & 0.67   & 13.71 & 23.71  & 0.64  & 0.96 \\
    3  & Barbados            & 629    & B-    & 7.59  & 0.58   & 6.36  & 16.36  & 0.39  & 0.72 \\
    4  & Belize              & 157    & B-    & 7.59  & 0.15   & 6.36  & 16.36  & 0.39  & 0.72 \\
    5  & Bolivia             & 4012   & B-    & 7.59  & 3.70   & 6.36  & 16.36  & 0.39  & 0.72 \\
    6  & Brazil              & 15185  & BB-   & 0.90  & 13.99  & 16.20 & 26.20  & 0.72  & 1.04 \\
    7  & Chile               & 2290   & A     & 0.01  & 2.11   & 21.92 & 31.92  & 0.90  & 1.24 \\
    8  & Colombia            & 11150  & BB+   & 0.18  & 10.27  & 20.62 & 30.62  & 0.86  & 1.19 \\
    9  & Costa Rica          & 2454   & B+    & 1.46  & 2.26   & 13.71 & 23.71  & 0.64  & 0.96 \\
    10 & Dominican Republic  & 3971   & BB    & 0.40  & 3.66   & 19.10 & 29.10  & 0.81  & 1.14 \\
    11 & Ecuador             & 7541   & B-    & 7.59  & 6.95   & 6.36  & 16.36  & 0.39  & 0.72 \\
    12 & El Salvador         & 2310   & CCC+  &18.20  & 2.13   & 6.00  & 16.00  & 0.37  & 0.71 \\
    13 & Guatemala           & 1913   & BB    & 0.40  & 1.76   & 19.10 & 29.10  & 0.81  & 1.14 \\
    14 & Guyana              & 787    & BB-   & 0.90  & 0.73   & 16.20 & 26.20  & 0.72  & 1.04 \\
    15 & Haiti               & 0      & CCC   &50.40  & 0.00   & 6.00  & 16.00  & 0.37  & 0.71 \\
    16 & Honduras            & 3068   & BB-   & 0.90  & 2.83   & 16.20 & 26.20  & 0.72  & 1.04 \\
    17 & Jamaica             & 1692   & BB-   & 0.90  & 1.56   & 16.20 & 26.20  & 0.72  & 1.04 \\
    18 & Mexico              & 15417  & BBB   & 0.06  & 14.21  & 21.53 & 31.53  & 0.89  & 1.22 \\
    19 & Nicaragua           & 2316   & B     & 2.38  & 2.13   & 10.87 & 20.87  & 0.54  & 0.87 \\
    20 & Panama              & 4357   & BBB   & 0.06  & 4.01   & 21.53 & 31.53  & 0.89  & 1.22 \\
    21 & Paraguay            & 3071   & BB    & 0.40  & 2.83   & 19.10 & 29.10  & 0.81  & 1.14 \\
    22 & Peru                & 3158   & BBB   & 0.06  & 2.91   & 21.53 & 31.53  & 0.89  & 1.22 \\
    23 & Suriname            & 656    & CCC-  &84.80  & 0.60   & 6.00  & 16.00  & 0.37  & 0.71 \\
    24 & Trinidad and Tobago & 732    & BBB-  & 0.11  & 0.68   & 21.14 & 31.14  & 0.87  & 1.21 \\
    25 & Uruguay             & 3370   & BBB+  & 0.04  & 3.11   & 21.68 & 31.68  & 0.89  & 1.23 \\
    26 & Venezuela           & 2011   & CC    &100.00 & 1.85   & 6.00  & 16.00  & 0.37  & 0.71 \\
    \hline
    & Total amount & 108520 & & & 100 & & & & \\
    \bottomrule
    \end{tabular}}
    \caption{Borrowing countries, credit rating \protect\footnote{according to S\&P, Fitch or Moody's ratings, \url{https://tradingeconomics.com/}, or obtained by regressing OECD ratings on S\&P ratings if ratings were not available} and amount outstanding (in million USD) as of 2022 (Source: \cite[p.83]{IDB2022}).
    \(\pi_n\) denotes the default probability and \(e_n\) is the relative amount of the exposure in the total portfolio. The last two pairs of columns refer the asset correlation uncertainty intervals \(I_n = [\underline{\rho}_n,\overline{\rho}_n]\) in the Gaussian model (Example \ref{exthreshmod}\eqref{exthreshmoda}) and the parameter uncertainty interval \(I_n = [\underline{\theta}_n,\overline{\theta}_n]\) in the Clayton models (Examples \ref{exthreshmod}\eqref{exthreshmodb} and \eqref{exthreshmodc}), where we set \(\underline{\rho}_n := \rho_n - 5\%\) and \(\overline{\rho}_n := \rho_n + 5\%\) with the asset correlation \(\rho_n\) obtained from the IRB formula \eqref{equ IRB asset correlation} depending on \(\pi_n\) and the bounds chosen as \(0.11\) and \(0.27.\) 
    }
    \label{tab IDB portfolio}
\end{table}
\footnotetext{According to S\&P, Fitch or Moody's ratings, \url{https://tradingeconomics.com/}, or obtained by regressing OECD ratings on S\&P ratings if ratings were not available.}

\begin{table}[htb]
    \centering
    \scalebox{0.80}{
    \begin{tabular}{c||c|c||c|c||c|c||c|c||c|c}
        \toprule
        \multirow{2}{*}{\vspace{0.5em} Scenario 1} & \multicolumn{2}{c||}{Gaussian} & \multicolumn{2}{c||}{Clayton} & \multicolumn{2}{c||}{Surv. Clayton} & \multicolumn{2}{c||}{Gauss-Clayton} & \multicolumn{2}{c}{} \\
        
        & \(\underline{\avar}_\alpha\) & \(\overline{\avar}_\alpha\) & \(\underline{\avar}_\alpha\) & \(\overline{\avar}_\alpha\) & \(\underline{\avar}_\alpha\) & \(\overline{\avar}_\alpha\) & \(\underline{\avar}_\alpha\) & \(\overline{\avar}_\alpha\) & \(\avar_\alpha^\perp\) & \(\avar_\alpha^c\) \\
        \midrule
        \(\alpha=95\) & 2.72\% & 2.83\% & 2.96\% & 3.22\% & 2.67\% & 2.70\% & 2.77\% & 3.10\% & 2.64\% & 3.68\% \\
        \(\alpha=99\) & 3.32\% & 3.51\% & 4.27\% & 4.91\% & 3.21\% & 3.25\% & 3.41\% & 4.63\% & 3.18\% & 5.89\% \\
        \bottomrule
    \end{tabular}}
    \caption{Lower and upper bounds for the AVaR of the percentage losses for the real-data example discussed in Section \ref{secreadatexa}. The underlying robust credit risk models incorporate parameter uncertainty and model uncertainty in a Gaussian, Clayton and survival Clayton copula setting as in Examples \ref{exthreshmod} and \ref{exthreshmod2}; see the caption of Table \ref{tab IDB portfolio} for the respective parameters.
     The loss given default in this scenario is \(\delta_n \equiv 0.1\) and the numbers are obtained by Monte Carlo simulations with a sample size of \(1.000.000.\)}
    \label{tab:avar3}
\end{table}

\begin{table}[htb]
    \centering
    \scalebox{0.80}{
    \begin{tabular}{c||c|c||c|c||c|c||c|c||c|c}
        \toprule
        \multirow{2}{*}{\vspace{0.5em} Scenario 2} & \multicolumn{2}{c||}{Gaussian} & \multicolumn{2}{c||}{Clayton} & \multicolumn{2}{c||}{Surv. Clayton} & \multicolumn{2}{c||}{Gauss-Clayton} & \multicolumn{2}{c}{} \\
        
        & \(\underline{\avar}_\alpha\) & \(\overline{\avar}_\alpha\) & \(\underline{\avar}_\alpha\) & \(\overline{\avar}_\alpha\) & \(\underline{\avar}_\alpha\) & \(\overline{\avar}_\alpha\) & \(\underline{\avar}_\alpha\) & \(\overline{\avar}_\alpha\) & \(\avar_\alpha^\perp\) & \(\avar_\alpha^c\) \\
        \midrule
        \(\alpha=95\) & 8.44\% & 8.46\% & 8.48\% & 8.53\% & 8.44\% & 8.45\% & 8.45\% & 8.51\% & 8.44\% & 8.63\% \\
        \(\alpha=99\) & 11.19\% & 11.22\% & 11.27\% &11.36\% & 11.16\% & 11.17\% & 11.21\% & 11.32\% & 11.14\% & 11.55\% \\
        \bottomrule
    \end{tabular}}
    \caption{
    Lower and upper bounds for the AVaR of the percentage losses for the real-data example discussed in Section \ref{secreadatexa}. The underlying robust credit risk models incorporate parameter uncertainty and model uncertainty in a Gaussian, Clayton and survival Clayton copula setting as in Examples \ref{exthreshmod} and \ref{exthreshmod2}; see the caption of Table \ref{tab IDB portfolio} for the respective parameters.
     The loss given default in this scenario is \(\delta_n \sim beta(0.3,2.7)\)-distributed and the numbers are obtained by Monte Carlo simulations with a sample size of \(1.000.000.\)
     }
    \label{tab:avar4}
\end{table}

In the benchmark specification, we choose the asset correlation $\rho_n$ for each borrower according to the IRB approach (\ref{equ IRB asset correlation}) as a function of the unconditional default probabilities $\pi_n$ for each borrower
with the 12\% and 24\% bounds replaced by 11\% and 27\%. This choice is based on estimates of the asset correlation parameter $\rho_n$ from equity index data (cf. \cite{RiskControl2023} or \cite{LuetkebohmertSesterShen2023}). To achieve this, we collect country-specific equity index data calculated by MSCI and a regional equity index for emerging markets in Latin America (MSCI EM Latin America) that is used as a common factor. We then calculate normalized log returns $X_n$ for each country's equity index series and afterwards regress the series $X_n$ on the regional index. The resulting regression coefficient corresponds to the asset correlation $\sqrt{\rho_n}$ in our single factor model. Note that equity indices are not available for every borrowing country in the portfolio. In fact, we were only able to collect data for four countries: Argentina, Brazil, Colombia and Peru. Our estimates for the asset correlation for the four Latin American countries in our data set range between 11\% and 27\% with an average asset correlation of 19\%. Thus, we adjusted the bounds in the IRB asset correlation formula accordingly. To allow for parameter uncertainty, we set lower and upper bounds for borrower-specific correlations as $\underline{\rho}_n=\rho_n-5\%$ and $\overline{\rho}_n=\rho_n+5\%,$ respectively. The resulting values are reported in Table \ref{tab IDB portfolio}.

While in the case of a homogeneous and rather large credit risk portfolio (Tables \ref{tab:avar1} and \ref{tab:avar2}), the dependencies between the default events are crucial, we find that in the rather small (in terms of number of borrowers) and very heterogeneous credit portfolio of IDB, the AVaR bounds are much more comparable (Tables \ref{tab:avar3} and \ref{tab:avar4}). The reason is that for such a small and concentrated (w.r.t. single borrowers) portfolio, systematic risk reflected by common dependence on the risk factor is less relevant whereas idiosyncratic risk plays a much more important role, compare \cite{LuetkebohmertSesterShen2023} and \cite{LuetkebohmertSester2024}. Moreover, we observe that the distribution of the losses given default has a strong influence on the total portfolio risk which is due to the fact that tail events result from a combination of bad realizations of the risk factor and high values for the loss given default variable. Thus, AVaR increases with the variance of loss given default.

%\nocite{*}

\section*{Conclusion}

In this paper, we established a convex ordering result for credit portfolio losses in stochastically increasing BMMs based on transparent and interpretable assumptions on the conditional default probabilities.
For threshold models, we provided a related ordering result based on a pointwise comparison of bivariate copulas.
As a consequence, we obtained several robustness results for credit portfolio losses. Without assuming monotonicity of the conditional default probabilities in the common factor, we derived an upper bound on portfolio losses that improves the standard comonotonic bound. 
%As a consequence of Theorem \ref{propcomBMMs}, we have derived several robustness results for credit portfolio losses.
A detailed simulation study and a real-data example illustrate the relevance and effectiveness of our approach. 
Our methods are especially applicable to standard industry models of credit portfolio risk, such as CreditMetrics or KMV Portfolio Manager, and allow for the incorporation of tail dependencies and model uncertainty into these models. In this way, our paper provides tractable and practically relevant methods for the assessment of credit portfolio risk under mild structural assumptions.

\section*{Acknowledgements}
The authors thank two anonymous referees for their valuable comments, which helped improve the manuscript.
The first author gratefully acknowledges the support of the Austrian Science Fund (FWF)
projects {10.55776/PAT1669224} \emph{SORT: Stochastic Orders for Functional Dependence}
{P 36155-N} \emph{ReDim: Quantifying Dependence via Dimension Reduction}
and the support of the WISS 2025 project 'IDA-lab Salzburg' (20204-WISS/225/197-2019 and 20102-F1901166-KZP).

\section*{Declarations of Interest}
The authors have no relevant financial or non-financial interests to declare.

\section*{Data availability}

Data is publicly available via \cite[p.83]{IDB2022} and \cite[Table~35]{S&P2022}.

%\end{appendix}

\fancyhf{}
\fancyhead[RO]{REFERENCES}
\fancyhead[LO]{Robust Bernoulli mixture models}
\fancyfoot[C]{\thepage}

%% If you have bibdatabase file and want bibtex to generate the
%% bibitems, please use
%%
 % \bibliographystyle{elsarticle-harv}
   %   \bibliography{BibFile--bib}

\begin{thebibliography}{56}
\expandafter\ifx\csname natexlab\endcsname\relax\def\natexlab#1{#1}\fi
\providecommand{\url}[1]{\texttt{#1}}
\providecommand{\href}[2]{#2}
\providecommand{\path}[1]{#1}
\providecommand{\DOIprefix}{doi:}
\providecommand{\ArXivprefix}{arXiv:}
\providecommand{\URLprefix}{URL: }
\providecommand{\Pubmedprefix}{pmid:}
\providecommand{\doi}[1]{\href{http://dx.doi.org/#1}{\path{#1}}}
\providecommand{\Pubmed}[1]{\href{pmid:#1}{\path{#1}}}
\providecommand{\bibinfo}[2]{#2}
\ifx\xfnm\relax \def\xfnm[#1]{\unskip,\space#1}\fi
%Type = Article
\bibitem[{{Ansari} and {Ritter}(2025)}]{Ansari-Ritter-2024}
\bibinfo{author}{{Ansari}, J.}, \bibinfo{author}{{Ritter}, M.},
  \bibinfo{year}{2025}.
\newblock \bibinfo{title}{{Comparison} results for positive supermodular
  dependent {Markov} tree distributions}.
\newblock \bibinfo{journal}{{Electron. J. Stat.}} \bibinfo{volume}{19},
  \bibinfo{pages}{5409--5456}.
%Type = Article
\bibitem[{Ansari and Rockel(2024)}]{Ansari-Rockel-2023}
\bibinfo{author}{Ansari, J.}, \bibinfo{author}{Rockel, M.},
  \bibinfo{year}{2024}.
\newblock \bibinfo{title}{Dependence properties of bivariate copula families}.
\newblock \bibinfo{journal}{Depend. Model.} \bibinfo{volume}{12},
  \bibinfo{pages}{36}.
%Type = Article
\bibitem[{Ansari and R{\"u}schendorf(2021)}]{Ansari-Rueschendorf-2021}
\bibinfo{author}{Ansari, J.}, \bibinfo{author}{R{\"u}schendorf, L.},
  \bibinfo{year}{2021}.
\newblock \bibinfo{title}{Sklar's theorem, copula products, and ordering
  results in factor models}.
\newblock \bibinfo{journal}{Depend. Model.} \bibinfo{volume}{9},
  \bibinfo{pages}{267--306}.
%Type = Article
\bibitem[{{Ansari} and {R\"uschendorf}(2024)}]{Ansari-Rueschendorf-2023}
\bibinfo{author}{{Ansari}, J.}, \bibinfo{author}{{R\"uschendorf}, L.},
  \bibinfo{year}{2024}.
\newblock \bibinfo{title}{{Supermodular} and directionally convex comparison
  results for general factor models}.
\newblock \bibinfo{journal}{{J. Multivariate Anal}} \bibinfo{volume}{201},
  \bibinfo{pages}{20}.
%Type = Misc
\bibitem[{{Basel Committee on Banking Supervision}(2005)}]{IRB2005}
\bibinfo{author}{{Basel Committee on Banking Supervision}},
  \bibinfo{year}{2005}.
\newblock \bibinfo{title}{{An explanatory note on the Basel II IRB risk weight
  functions}}.
\newblock \bibinfo{note}{Bank for International Settlements}.
%Type = Misc
\bibitem[{{Basel Committee on Banking Supervision}(2011)}]{Basel2011}
\bibinfo{author}{{Basel Committee on Banking Supervision}},
  \bibinfo{year}{2011}.
\newblock \bibinfo{title}{Basel {III}: A global regulatory framework for more
  resilient banks and banking systems}.
\newblock \bibinfo{note}{Bank for Interntional Settlements}.
%Type = Misc
\bibitem[{{Basel Committee on Banking Supervision}(2013)}]{Basel2013}
\bibinfo{author}{{Basel Committee on Banking Supervision}},
  \bibinfo{year}{2013}.
\newblock \bibinfo{title}{Fundamental review of the trading book: A revised
  market risk framework}.
\newblock \bibinfo{note}{Bank for International Settlements}.
%Type = Article
\bibitem[{B{\"a}uerle and M{\"u}ller(2006)}]{Bauerle-2006}
\bibinfo{author}{B{\"a}uerle, N.}, \bibinfo{author}{M{\"u}ller, A.},
  \bibinfo{year}{2006}.
\newblock \bibinfo{title}{Stochastic orders and risk measures: consistency and
  bounds}.
\newblock \bibinfo{journal}{Insur. Math. Econ.} \bibinfo{volume}{38},
  \bibinfo{pages}{132--148}.
%Type = Article
\bibitem[{Bernard et~al.(2018)Bernard, Denuit and Vanduffel}]{Bernard-2018c}
\bibinfo{author}{Bernard, C.}, \bibinfo{author}{Denuit, M.},
  \bibinfo{author}{Vanduffel, S.}, \bibinfo{year}{2018}.
\newblock \bibinfo{title}{Measuring portfolio risk under partial dependence
  information}.
\newblock \bibinfo{journal}{Journal of Risk and Insurance}
  \bibinfo{volume}{85}, \bibinfo{pages}{843--863}.
%Type = Article
\bibitem[{Bernard et~al.(2014)Bernard, Jiang and Wang}]{Bernard-2014}
\bibinfo{author}{Bernard, C.}, \bibinfo{author}{Jiang, X.},
  \bibinfo{author}{Wang, R.}, \bibinfo{year}{2014}.
\newblock \bibinfo{title}{Risk aggregation with dependence uncertainty}.
\newblock \bibinfo{journal}{Insur. Math. Econ.} \bibinfo{volume}{54},
  \bibinfo{pages}{93--108}.
%Type = Article
\bibitem[{Bernard et~al.(2020)Bernard, Kazzi and Vanduffel}]{Bernard-2020}
\bibinfo{author}{Bernard, C.}, \bibinfo{author}{Kazzi, R.},
  \bibinfo{author}{Vanduffel, S.}, \bibinfo{year}{2020}.
\newblock \bibinfo{title}{Range value-at-risk bounds for unimodal distributions
  under partial information}.
\newblock \bibinfo{journal}{Insur. Math. Econ.} \bibinfo{volume}{94},
  \bibinfo{pages}{9--24}.
%Type = Article
\bibitem[{Bernard et~al.(2024)Bernard, Pesenti and Vanduffel}]{Pesenti-2024}
\bibinfo{author}{Bernard, C.}, \bibinfo{author}{Pesenti, S.M.},
  \bibinfo{author}{Vanduffel, S.}, \bibinfo{year}{2024}.
\newblock \bibinfo{title}{Robust distortion risk measures}.
\newblock \bibinfo{journal}{Math. Finance} \bibinfo{volume}{34},
  \bibinfo{pages}{774--818}.
%Type = Article
\bibitem[{Bernard et~al.(2017a)Bernard, R{\"u}schendorf and
  Vanduffel}]{Bernard-2017}
\bibinfo{author}{Bernard, C.}, \bibinfo{author}{R{\"u}schendorf, L.},
  \bibinfo{author}{Vanduffel, S.}, \bibinfo{year}{2017}a.
\newblock \bibinfo{title}{Value-at-risk bounds with variance constraints}.
\newblock \bibinfo{journal}{Journal of Risk and Insurance}
  \bibinfo{volume}{84}, \bibinfo{pages}{923--959}.
%Type = Article
\bibitem[{Bernard et~al.(2017b)Bernard, R{\"u}schendorf, Vanduffel and
  Wang}]{Bernard-2017b}
\bibinfo{author}{Bernard, C.}, \bibinfo{author}{R{\"u}schendorf, L.},
  \bibinfo{author}{Vanduffel, S.}, \bibinfo{author}{Wang, R.},
  \bibinfo{year}{2017}b.
\newblock \bibinfo{title}{Risk bounds for factor models}.
\newblock \bibinfo{journal}{Finance Stoch.} \bibinfo{volume}{21},
  \bibinfo{pages}{631--659}.
%Type = Article
\bibitem[{Bernard et~al.(2017c)Bernard, R{\"u}schendorf, Vanduffel and
  Yao}]{Bernard-2018}
\bibinfo{author}{Bernard, C.}, \bibinfo{author}{R{\"u}schendorf, L.},
  \bibinfo{author}{Vanduffel, S.}, \bibinfo{author}{Yao, J.},
  \bibinfo{year}{2017}c.
\newblock \bibinfo{title}{How robust is the value-at-risk of credit risk
  portfolios?}
\newblock \bibinfo{journal}{The European Journal of Finance}
  \bibinfo{volume}{23}, \bibinfo{pages}{507--534}.
%Type = Article
\bibitem[{Bernard and Vanduffel(2015)}]{Bernard-Vanduffel-2015}
\bibinfo{author}{Bernard, C.}, \bibinfo{author}{Vanduffel, S.},
  \bibinfo{year}{2015}.
\newblock \bibinfo{title}{A new approach to assessing model risk in high
  dimensions}.
\newblock \bibinfo{journal}{Journal of Banking and Finance}
  \bibinfo{volume}{58}, \bibinfo{pages}{166--178}.
%Type = Article
\bibitem[{Bignozzi et~al.(2015)Bignozzi, Puccetti and
  R{\"u}schendorf}]{Bignozzi-2015}
\bibinfo{author}{Bignozzi, V.}, \bibinfo{author}{Puccetti, G.},
  \bibinfo{author}{R{\"u}schendorf, L.}, \bibinfo{year}{2015}.
\newblock \bibinfo{title}{Reducing model risk via positive and negative
  dependence assumptions}.
\newblock \bibinfo{journal}{Insurance Math. Econ.} \bibinfo{volume}{61},
  \bibinfo{pages}{17--26}.
%Type = Book
\bibitem[{Bluhm et~al.(2010)Bluhm, Overbeck and Wagner}]{Bluhm}
\bibinfo{author}{Bluhm, C.}, \bibinfo{author}{Overbeck, L.},
  \bibinfo{author}{Wagner, C.}, \bibinfo{year}{2010}.
\newblock \bibinfo{title}{Introduction to Credit Risk Modeling.}
\newblock Chapman Hall/CRC Financ. Math. Ser.. \bibinfo{edition}{2nd ed.} ed.,
  \bibinfo{publisher}{Boca Raton, FL: CRC Press}.
%Type = Article
\bibitem[{Burgert and R{\"u}schendorf(2006)}]{Rueschendorf-2006}
\bibinfo{author}{Burgert, C.}, \bibinfo{author}{R{\"u}schendorf, L.},
  \bibinfo{year}{2006}.
\newblock \bibinfo{title}{Consistent risk measures for portfolio vectors}.
\newblock \bibinfo{journal}{Insur. Math. Econ.} \bibinfo{volume}{38},
  \bibinfo{pages}{289--297}.
%Type = Techreport
\bibitem[{{Credit Suisse Financial Products}(1997)}]{CreditRisk+}
\bibinfo{author}{{Credit Suisse Financial Products}}, \bibinfo{year}{1997}.
\newblock \bibinfo{title}{Credit Risk+: A Credit Risk Management Framework.}
\newblock \bibinfo{type}{Technical Report}. Credit Suisse Financial Products.
%Type = Article
\bibitem[{Crook and Moreira(2011)}]{CrookMoreira2011}
\bibinfo{author}{Crook, J.}, \bibinfo{author}{Moreira, F.},
  \bibinfo{year}{2011}.
\newblock \bibinfo{title}{Checking for asymmetric default dependence in a
  credit card portfolio: A copula approach}.
\newblock \bibinfo{journal}{Journal of Empirical Finance} \bibinfo{volume}{18},
  \bibinfo{pages}{728--742}.
%Type = Article
\bibitem[{Crouhy et~al.(2000)Crouhy, Galai and Mark}]{Crouhy}
\bibinfo{author}{Crouhy, M.}, \bibinfo{author}{Galai, D.},
  \bibinfo{author}{Mark, R.}, \bibinfo{year}{2000}.
\newblock \bibinfo{title}{A comparative analysis of current credit risk
  models}.
\newblock \bibinfo{journal}{Journal of Banking and Finance}
  \bibinfo{volume}{24}, \bibinfo{pages}{59--117}.
%Type = Article
\bibitem[{Denuit et~al.(1999)Denuit, Genest and Marceau}]{Denuit-1999}
\bibinfo{author}{Denuit, M.}, \bibinfo{author}{Genest, C.},
  \bibinfo{author}{Marceau, {\'E}.}, \bibinfo{year}{1999}.
\newblock \bibinfo{title}{Stochastic bounds on sums of dependent risks}.
\newblock \bibinfo{journal}{Insur. Math. Econ.} \bibinfo{volume}{25},
  \bibinfo{pages}{85--104}.
%Type = Article
\bibitem[{Donnelly and Embrechts(2010)}]{DonnellyEmbrechts2010}
\bibinfo{author}{Donnelly, C.}, \bibinfo{author}{Embrechts, P.},
  \bibinfo{year}{2010}.
\newblock \bibinfo{title}{The devil is in the tails: actuarial mathematics and
  the subprime mortgage crisis}.
\newblock \bibinfo{journal}{ASTIN Bulletin} \bibinfo{volume}{40},
  \bibinfo{pages}{1--33}.
%Type = Book
\bibitem[{{Durante} and {Sempi}(2016)}]{Durante-2016}
\bibinfo{author}{{Durante}, F.}, \bibinfo{author}{{Sempi}, C.},
  \bibinfo{year}{2016}.
\newblock \bibinfo{title}{{Principles of Copula Theory}}.
\newblock \bibinfo{publisher}{Boca Raton, FL: CRC Press}.
%Type = Article
\bibitem[{Embrechts and Puccetti(2006)}]{Embrechts-2006}
\bibinfo{author}{Embrechts, P.}, \bibinfo{author}{Puccetti, G.},
  \bibinfo{year}{2006}.
\newblock \bibinfo{title}{Bounds for functions of dependent risks}.
\newblock \bibinfo{journal}{Finance Stoch.} \bibinfo{volume}{10},
  \bibinfo{pages}{341--352}.
%Type = Article
\bibitem[{Embrechts et~al.(2013)Embrechts, Puccetti and
  R{\"u}schendorf}]{Embrechts-2013}
\bibinfo{author}{Embrechts, P.}, \bibinfo{author}{Puccetti, G.},
  \bibinfo{author}{R{\"u}schendorf, L.}, \bibinfo{year}{2013}.
\newblock \bibinfo{title}{Model uncertainty and {VaR} aggregation}.
\newblock \bibinfo{journal}{Journal of Banking \& Finance}
  \bibinfo{volume}{37}, \bibinfo{pages}{2750--2764}.
%Type = Book
\bibitem[{F{\"o}llmer and Schied(2016)}]{Follmer-Schied-2016}
\bibinfo{author}{F{\"o}llmer, H.}, \bibinfo{author}{Schied, A.},
  \bibinfo{year}{2016}.
\newblock \bibinfo{title}{Stochastic Finance. {An} Introduction in Discrete
  Time.}
\newblock De Gruyter Textb.. \bibinfo{edition}{4th revised edition} ed.,
  \bibinfo{publisher}{Berlin: de Gruyter}.
%Type = Article
\bibitem[{Frey et~al.(2001)Frey, McNeil and Nyfeler}]{FreyMcNeilNyfeler}
\bibinfo{author}{Frey, R.}, \bibinfo{author}{McNeil, A.},
  \bibinfo{author}{Nyfeler, M.}, \bibinfo{year}{2001}.
\newblock \bibinfo{title}{Copulas and credit models}.
\newblock \bibinfo{journal}{Risk} \bibinfo{volume}{10},
  \bibinfo{pages}{111--114}.
%Type = Article
\bibitem[{Frey and McNeil(2003)}]{Frey-McNeil-2003}
\bibinfo{author}{Frey, R.}, \bibinfo{author}{McNeil, A.J.},
  \bibinfo{year}{2003}.
\newblock \bibinfo{title}{Dependent defaults in models of portfolio credit
  risk}.
\newblock \bibinfo{journal}{Journal of Risk} \bibinfo{volume}{6},
  \bibinfo{pages}{59--92}.
%Type = Article
\bibitem[{Gordy(2000)}]{Gordy-2000}
\bibinfo{author}{Gordy, M.B.}, \bibinfo{year}{2000}.
\newblock \bibinfo{title}{A comparative anatomy of credit risk models}.
\newblock \bibinfo{journal}{Journal of Banking \& Finance}
  \bibinfo{volume}{24}, \bibinfo{pages}{119--149}.
%Type = Unpublished
\bibitem[{IDB(2022)}]{IDB2022}
\bibinfo{author}{IDB}, \bibinfo{year}{2022}.
\newblock \bibinfo{title}{{Financial Report, December 31, 2022}}.
\newblock \bibinfo{note}{Available at
  \url{https://publications.iadb.org/en/inter-american-development-bank-annual-report-2022-financial-statements}}.
%Type = Book
\bibitem[{Joe(1997)}]{Joe-1997}
\bibinfo{author}{Joe, H.}, \bibinfo{year}{1997}.
\newblock \bibinfo{title}{Multivariate Models and Dependence Concepts}.
  volume~\bibinfo{volume}{73} of \textit{\bibinfo{series}{Monogr. Stat. Appl.
  Probab.}}
\newblock \bibinfo{publisher}{London: Chapman {and} Hall}.
%Type = Incollection
\bibitem[{Jouini et~al.(2006)Jouini, Schachermayer and
  Touzi}]{Schachermayer-2006}
\bibinfo{author}{Jouini, E.}, \bibinfo{author}{Schachermayer, W.},
  \bibinfo{author}{Touzi, N.}, \bibinfo{year}{2006}.
\newblock \bibinfo{title}{Law invariant risk measures have the {Fatou}
  property}, in: \bibinfo{booktitle}{Advances in Mathematical Economics. Vol.
  9}. \bibinfo{publisher}{Tokyo: Springer}, pp. \bibinfo{pages}{49--71}.
%Type = Misc
\bibitem[{{JPMorgan}(1997)}]{CreditMetrics}
\bibinfo{author}{{JPMorgan}}, \bibinfo{year}{1997}.
\newblock \bibinfo{title}{{CreditMetrics -- Technical Document}}.
\newblock \bibinfo{note}{JP Morgan \& Co. Inc., New York}.
%Type = Techreport
\bibitem[{Kealhofer and Bohn(2001)}]{Kealhofer2001}
\bibinfo{author}{Kealhofer, S.}, \bibinfo{author}{Bohn, J.},
  \bibinfo{year}{2001}.
\newblock \bibinfo{title}{Portfolio Management of Default Risk}.
\newblock \bibinfo{type}{Technical Report}.
\newblock \bibinfo{note}{KMV Corporation}.
%Type = Article
\bibitem[{Koyluoglu and Hickman(1998)}]{Koyluoglu-1998}
\bibinfo{author}{Koyluoglu, H.U.}, \bibinfo{author}{Hickman, A.},
  \bibinfo{year}{1998}.
\newblock \bibinfo{title}{Reconcilable differences}.
\newblock \bibinfo{journal}{Risk} \bibinfo{volume}{11},
  \bibinfo{pages}{56--62}.
%Type = Unpublished
\bibitem[{Koziol et~al.(2015)Koziol, Schell and
  Eckhardt}]{KoziolSchellEckhardt2015}
\bibinfo{author}{Koziol, P.}, \bibinfo{author}{Schell, C.},
  \bibinfo{author}{Eckhardt, M.}, \bibinfo{year}{2015}.
\newblock \bibinfo{title}{Credit risk stress testing and copulas -- is the
  {Gaussian} copula better than its reputation?}
\newblock \bibinfo{note}{Discussion Paper No. 46/2015, Deutsche Bundesbank}.
%Type = Article
\bibitem[{Li et~al.(2018)Li, Shao, Wang and Yang}]{Li-2018}
\bibinfo{author}{Li, L.}, \bibinfo{author}{Shao, H.}, \bibinfo{author}{Wang,
  R.}, \bibinfo{author}{Yang, J.}, \bibinfo{year}{2018}.
\newblock \bibinfo{title}{Worst-case range value-at-risk with partial
  information}.
\newblock \bibinfo{journal}{SIAM J. Financ. Math.} \bibinfo{volume}{9},
  \bibinfo{pages}{190--218}.
%Type = Unpublished
\bibitem[{L\"{u}tkebohmert and Sester(2024)}]{LuetkebohmertSester2024}
\bibinfo{author}{L\"{u}tkebohmert, E.}, \bibinfo{author}{Sester, J.},
  \bibinfo{year}{2024}.
\newblock \bibinfo{title}{Measuring name concentrations through deep learning}.
\newblock
  \bibinfo{note}{\url{https://papers.ssrn.com/sol3/papers.cfm?abstract_id=4777887}}.
%Type = Unpublished
\bibitem[{L\"{u}tkebohmert et~al.(2023)L\"{u}tkebohmert, Sester and
  Shen}]{LuetkebohmertSesterShen2023}
\bibinfo{author}{L\"{u}tkebohmert, E.}, \bibinfo{author}{Sester, J.},
  \bibinfo{author}{Shen, H.}, \bibinfo{year}{2023}.
\newblock \bibinfo{title}{On the relevance and appropriateness of name
  concentration risk adjustments for portfolios of multilateral development
  banks}.
\newblock \bibinfo{note}{\url{https://ssrn.com/abstract=4642039}}.
%Type = Article
\bibitem[{Lux and R{\"u}schendorf(2019)}]{Lux-2019}
\bibinfo{author}{Lux, T.}, \bibinfo{author}{R{\"u}schendorf, L.},
  \bibinfo{year}{2019}.
\newblock \bibinfo{title}{Value-at-risk bounds with two-sided dependence
  information}.
\newblock \bibinfo{journal}{Math. Finance} \bibinfo{volume}{29},
  \bibinfo{pages}{967--1000}.
%Type = Book
\bibitem[{{McNeil} et~al.(2015){McNeil}, {Frey} and
  {Embrechts}}]{Embrechts-2015}
\bibinfo{author}{{McNeil}, A.J.}, \bibinfo{author}{{Frey}, R.},
  \bibinfo{author}{{Embrechts}, P.}, \bibinfo{year}{2015}.
\newblock \bibinfo{title}{{Quantitative Risk Management. Concepts, Techniques
  and Tools}}.
\newblock \bibinfo{publisher}{Princeton, NJ: Princeton University Press}.
%Type = Article
\bibitem[{Merton(1974)}]{Merton1974}
\bibinfo{author}{Merton, R.}, \bibinfo{year}{1974}.
\newblock \bibinfo{title}{On the pricing of corporate debt: the risk structure
  of interest rates}.
\newblock \bibinfo{journal}{Journal of Finance} \bibinfo{volume}{34},
  \bibinfo{pages}{449--470}.
%Type = Book
\bibitem[{M{\"u}ller and Stoyan(2002)}]{Mueller-Stoyan-2002}
\bibinfo{author}{M{\"u}ller, A.}, \bibinfo{author}{Stoyan, D.},
  \bibinfo{year}{2002}.
\newblock \bibinfo{title}{Comparison Methods for Stochastic Models and Risks}.
\newblock Wiley Ser. Probab. Stat., \bibinfo{publisher}{Chichester: Wiley}.
%Type = Book
\bibitem[{{Nelsen}(2006)}]{Nelsen-2006}
\bibinfo{author}{{Nelsen}, R.B.}, \bibinfo{year}{2006}.
\newblock \bibinfo{title}{{An Introduction to Copulas. 2nd ed.}}
\newblock \bibinfo{publisher}{New York, NY: Springer}.
%Type = Article
\bibitem[{Puccetti et~al.(2017)Puccetti, R{\"u}schendorf, Small and
  Vanduffel}]{Puccetti-2017}
\bibinfo{author}{Puccetti, G.}, \bibinfo{author}{R{\"u}schendorf, L.},
  \bibinfo{author}{Small, D.}, \bibinfo{author}{Vanduffel, S.},
  \bibinfo{year}{2017}.
\newblock \bibinfo{title}{Reduction of value-at-risk bounds via independence
  and variance information}.
\newblock \bibinfo{journal}{Scand. Actuar. J.} \bibinfo{volume}{2017},
  \bibinfo{pages}{245--266}.
%Type = Misc
\bibitem[{{Risk Control}(2023)}]{RiskControl2023}
\bibinfo{author}{{Risk Control}}, \bibinfo{year}{2023}.
\newblock \bibinfo{title}{Ratings and capital constraints on {IBRD and IDA}}.
\newblock \bibinfo{note}{Available at \url{www.riskcontrollimited.com}}.
%Type = Article
\bibitem[{R{\"u}schendorf(2009)}]{Rueschendorf-2009}
\bibinfo{author}{R{\"u}schendorf, L.}, \bibinfo{year}{2009}.
\newblock \bibinfo{title}{On the distributional transform, {Sklar}'s theorem,
  and the empirical copula process}.
\newblock \bibinfo{journal}{J. Stat. Plann. Inference} \bibinfo{volume}{139},
  \bibinfo{pages}{3921--3927}.
%Type = Book
\bibitem[{R{\"u}schendorf(2013)}]{Rueschendorf-2013}
\bibinfo{author}{R{\"u}schendorf, L.}, \bibinfo{year}{2013}.
\newblock \bibinfo{title}{Mathematical Risk Analysis. {Dependence}, Risk
  Bounds, Optimal Allocations and Portfolios}.
\newblock Springer Ser. Oper. Res. Financ. Eng., \bibinfo{publisher}{Berlin:
  Springer}.
%Type = Unpublished
\bibitem[{Sch\"{o}nbucher and Schubert(2001)}]{Schoenbucher2001}
\bibinfo{author}{Sch\"{o}nbucher, P.}, \bibinfo{author}{Schubert, D.},
  \bibinfo{year}{2001}.
\newblock \bibinfo{title}{Copula-dependent default risk in intensity models}.
\newblock \bibinfo{note}{Working paper, University of Bonn}.
%Type = Book
\bibitem[{Shaked and Shantikumar(2007)}]{Shaked-Shantikumar-2007}
\bibinfo{author}{Shaked, M.}, \bibinfo{author}{Shantikumar, J.G.},
  \bibinfo{year}{2007}.
\newblock \bibinfo{title}{Stochastic Orders}.
\newblock Springer Ser. Stat., \bibinfo{publisher}{New York, NY: Springer}.
%Type = Misc
\bibitem[{{S\&P Global Ratings}(2018)}]{S&P2018}
\bibinfo{author}{{S\&P Global Ratings}}, \bibinfo{year}{2018}.
\newblock \bibinfo{title}{Multilateral lending institutions and other
  supranational institutions ratings methdology}.
\newblock \bibinfo{note}{Technical report}.
%Type = Misc
\bibitem[{{S\&P Global Ratings}(2022)}]{S&P2022}
\bibinfo{author}{{S\&P Global Ratings}}, \bibinfo{year}{2022}.
\newblock \bibinfo{title}{Default, transition, and recovery: 2021 annual global
  sovereign default and rating transition study}.
\newblock \bibinfo{note}{Available at \url{https://www.spglobal.com}}.
%Type = Article
\bibitem[{Vasicek(2002)}]{Vasicek2002}
\bibinfo{author}{Vasicek, O.}, \bibinfo{year}{2002}.
\newblock \bibinfo{title}{Loan portfolio value}.
\newblock \bibinfo{journal}{Risk} \bibinfo{volume}{15},
  \bibinfo{pages}{160--162}.
%Type = Article
\bibitem[{Wang and Wang(2011)}]{Wang-2011}
\bibinfo{author}{Wang, B.}, \bibinfo{author}{Wang, R.}, \bibinfo{year}{2011}.
\newblock \bibinfo{title}{The complete mixability and convex minimization
  problems with monotone marginal densities}.
\newblock \bibinfo{journal}{J. Multivariate Anal.} \bibinfo{volume}{102},
  \bibinfo{pages}{1344--1360}.

\end{thebibliography}

%% else use the following coding to input the bibitems directly in the
%% TeX file.

\newpage

\fancyhf{}
\fancyhead[RO]{APPENDIX}
\fancyhead[LO]{Robust Bernoulli mixture models}
\fancyfoot[C]{\thepage}

%\begin{appendix}

\appendices

\section*{Appendix}

%Our comparison and robustness results in Sections 2 and 3 are based on a representation
%of BMMs through threshold models, as we outline in Appendix \ref{appendA}. The proofs of all %our results are provided in Appendix \ref{appendB}. 

\section{Threshold Models}\label{appendA}

We represent siBBMs in terms of copula-based threshold models and provide a comparison result for threshold models that we use for the proof of Theorem \ref{propcomBMMs}.

\subsection[Copula-Based Construction of siBMMs]{Copula-Based Construction of siBMMs via Threshold Models}\label{secmain}

%\subsection{Copula-based construction of Bernoulli mixture models}
It is well known that every BMM can be represented as a threshold model, i.e., the default indicator variable \(D_n\) is described by 
\begin{align}\label{defthreshmod}
    D_n = \1_{\{X_n\leq c_n\}}, \quad X_n = f_n(Y,\varepsilon_n),\quad n\in \{1,\ldots,N\},
\end{align}
for some default threshold value $c_n$, some measurable function $f_n \colon \R^2 \to \R$, and for independent idiosyncratic risk factors \(\varepsilon_1,\ldots,\varepsilon_N\) that are also independent from the random variable \(Y\) which models the common risk factor; see, e.g., \cite[Section 8.4.4]{Embrechts-2015}. % the idiosyncratic risk factor \(\varepsilon_n\) can w.l.o.g. be assumed to be independent. 
If the distribution function of \(X_n\) is continuous, then, under Assumption \eqref{ass1}, it follows that 
\begin{align}\label{eqthreshold}
    c_n = F_{X_n}^{-1}(\pi_n).
\end{align} 
The dependence structure of \(X_n\) and \(Y\) is transferred to the dependence structure of \(D_n\) and \(Z,\) where we generally assume that \(Z\) is a strictly decreasing transformation of \(Y.\)
%In this section, we study a subclass of threshold models where the copulas that describe the dependence structure between \(X_n\) and \(Y\) are stochastically increasing. 
%In this section, we provide a construction of threshold models with stochastically increasing copulas that describe the dependence structure between \(X_n\) and \(Y.\)  
%We show that this subclass of copula-based threshold models generates the class of siBMM. Hence, we may use some recently established comparison results for \(\conv\)-products of stochastically increasing copulas in order to prove Theorem \ref{propcomBMMs} on the comparison of portfolio losses in convex order.

 A standard example of a threshold model is the single-factor Merton model (cf. \cite{Merton1974,Vasicek2002,CreditMetrics}), where
\begin{align}\label{Vasicek}
    X_n=\sqrt{\rho_n}\, Y+\sqrt{1-\rho_n}\, \epsilon_n
\end{align}
for i.i.d. standard normally distributed random variables \(Y,\varepsilon_1,\ldots,\varepsilon_N.\) 
In this case, \((X_n,Y)\) follows a bivariate normal distribution with zero mean, variance \(1,\) and  factor loading \(\sqrt{\rho_n},\) describing the sensitivity of borrower $n$ to the common factor $Y$. The asset return correlation between two distinct borrowers $n$ and $m$ is then given by $\sqrt{\rho_n\rho_m}$. 
%\textcolor{blue}{Ist die Asset Correlation \(\sqrt{\rho_n}\) oder \(\rho_n\)? Die Korrelation zwischen \(X_n\) und \(Y\) ist auf alle Fälle \(\sqrt{\rho_n}\). Alternativ können wir auch \(\rho_n\) statt \(\sqrt{\rho_n}\) nehmen. Ich würde das hier so lassen, da das Basler IRB Modell ebenfalls so formuliert ist. Dabei wird $\rho_n$ als asset correlation bezeichnet, auch wenn die Correlation zwischen $X_n$ und $Y$ eigentlich $\sqrt{\rho_n}$ ist.} that describes the degree to which $X_n$ is affected by the common risk factor \(Y.\) 
%If \(\rho_n = 0,\) then \(X_n\) is independent from \(Y.\) If \(\rho_n = 1,\) then \(X_n = Y\) almost surely and thus \(X_n\) perfectly depends on \(Y.\) 
The dependence structure between \(X_n\) and \(Y\) is described by a Gaussian copula with correlation \(\sqrt{\rho_n}.\) 
It can easily be seen that \(X_n\) in \eqref{Vasicek} is stochastically increasing in \(Y\), and thus the implied BMM satisfies Assumption \eqref{ass3}; see also Example \ref{exBMM}\eqref{exBMMa} %\ref{exthreshmod}\eqref{exthreshmoda} 
below. 
However, in financial applications, the Gaussian copula is often not a good choice as it underestimates tail dependencies. This motivates to study robustness results for BMMs with general dependence structures.

In the sequel, we derive a copula-based characterization of siBMMs in terms of threshold models of the type
%\((D_1,\ldots,D_N, Z)\) can be represented as a threshold model of the type 
\eqref{defthreshmod}.
%\(D_n = \1_{\{X_n\leq z_n\}},\) where \((X_n)_n\) may be a latent or observable set of random variables with mutual dependence described by a product of copulas between \(X_n\) and a common risk factor \(Y.\) 
The following lemma shows that every threshold model \((D_n)_n\) 
%of the form \eqref{defthreshmod}--\eqref{defthreshmod} 
yields a siBMM if the function \(f_n\) in \eqref{defthreshmod} satisfies some monotonicity condition. 

\begin{lemma}[Construction of siBMMs through threshold models]\label{lemBMM}~\\
Let \(X_n = f_n(Y,\varepsilon_n),\) \(n\in \{1,\ldots,N\},\) for some measurable function \(f_n\) that is increasing in its first component and for \(Y,\varepsilon_1,\ldots,\varepsilon_N\) being independent. 
Consider the default indicator random variables \((D_1,\ldots,D_N)\)  defined by \(D_n := \1_{\{X_n \leq c_n\}}\) for some threshold values $c_n$ and define \(Z:=g(Y)\) for some strictly decreasing function \(g\). Then the following statements hold:
\begin{enumerate}[(i)]
    \item \((D_1,\ldots,D_N)\) satisfies conditions \eqref{ass2} and \eqref{ass3} in the definition of an siBMM.
    \item If additionally \(Y,\varepsilon_1,\ldots,\varepsilon_N\) have a continuous distribution function and if, for all \(n\in \{1,\ldots,N\},\) \(f_n\) is continuous, then also condition \eqref{ass1} is satisfied for \(c_n = F_{X_n}^{-1}(\pi_n),\) i.e., \((D_1,\ldots,D_N,Z)\) is a siBMM with \(\mathbb{P}(D_n=1) = \pi_n\) for all \(n.\)
\end{enumerate}
    %If \(f_n\) is increasing in its first component and if \(Y,\varepsilon_1,\ldots,\varepsilon_N\) are independent, then conditions \eqref{ass2} and \eqref{ass3} in the definition of a BMM are satisfied. %, i.e.,
    %\textcolor{blue}{Sind nicht Eigenschaften II und III bereits definierende Eigenschaften für ein Bernoulli mixture model? Ich habe daher das Lemma etwas umformuliert.} \textcolor{red}{ja, wenn wir diese so definieren. Typischerweise ist Eigenschaft (III) nicht dabei.}\textcolor{blue}{Ich habe in Section 2 etwas Erläuterungen hinzugefügt, um klar zu stellen, dass Bernoulli mixture models üblicherweise ohne Eigenschaft (III) definiert werden. So sollte es mit der obigen Formulierung des Lemmas eigentlich passen, oder sollen wir das noch klarer machen, z.B. in Section 2 als Definitionsumgebung formulieren und dann im Lemma sagen "is a Bernoulli mixture as defined in Def.xx"?}
    %\begin{enumerate}[(i)]
    %    \item \(D_1,\ldots,D_N\) are conditionally independent given \(Z,\)\
    %    \item \(\mathbb{P}(D_n = 1\mid Z=z)\) is increasing in \(z\) for all \(n.\)
    %\end{enumerate}
    %Further, if \(Y,\varepsilon_1,\ldots,\varepsilon_N\) have a continuous distribution function and if \(f_n\) is continuous, then also condition \eqref{ass1} is satisfied for \(c_n = F_{X_n}^{-1}(\pi_n),\) i.e., \(\mathbb{P}(D_n=1) = \pi_n\,\), and \((D_1,\ldots,D_N,Z)\) is a BMM.
\end{lemma}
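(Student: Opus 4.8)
The plan is to verify, one by one, the three defining properties of an siBMM: conditional independence \eqref{ass2} and stochastic increasingness \eqref{ass3} for part~(i), and the calibration $P(D_n=1)=\pi_n$ for part~(ii). Throughout I would condition on the common factor $Y$ rather than on $Z$ directly: since $g$ is strictly decreasing it is injective, so $\sigma(Z)=\sigma(Y)$ and conditioning on $\{Z=z\}$ coincides with conditioning on $\{Y=g^{-1}(z)\}$. For conditional independence I would use that, given $Y=y$, each indicator $D_n=\1_{\{f_n(y,\varepsilon_n)\le c_n\}}$ is a measurable function of $\varepsilon_n$ alone; since $\varepsilon_1,\dots,\varepsilon_N$ are independent of one another and of $Y$, the conditional joint law factorizes,
\[
 P(D_1=d_1,\dots,D_N=d_N\mid Y=y)=\prod_{n=1}^N P\bigl(\1_{\{f_n(y,\varepsilon_n)\le c_n\}}=d_n\bigr),
\]
which is exactly conditional independence given $Y$, hence given $Z$.

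For stochastic increasingness, the key computation is that a version of the conditional default probability is $p_{D_n}(z)=P(f_n(g^{-1}(z),\varepsilon_n)\le c_n)$, obtained by substituting $Y=g^{-1}(z)$ and integrating out the independent $\varepsilon_n$. I would set $h_n(y):=P(f_n(y,\varepsilon_n)\le c_n)$ and show $h_n$ is decreasing: for $y_1\le y_2$, monotonicity of $f_n$ in its first argument gives $f_n(y_1,\varepsilon_n)\le f_n(y_2,\varepsilon_n)$ pointwise, whence $\{f_n(y_2,\varepsilon_n)\le c_n\}\subseteq\{f_n(y_1,\varepsilon_n)\le c_n\}$ and so $h_n(y_2)\le h_n(y_1)$. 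Since $p_{D_n}=h_n\circ g^{-1}$ is the composition of the decreasing map $h_n$ with the strictly decreasing $g^{-1}$, it is increasing in $z$, which is \eqref{ass3}. This part, while short, carries the real content of~(i): it is where the interplay between the monotonicity of $f_n$ and the sign reversal in $g$ produces the desired positive dependence.

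For part~(ii) I would compute $P(D_n=1)=P(X_n\le c_n)=F_{X_n}(c_n)=F_{X_n}\bigl(F_{X_n}^{-1}(\pi_n)\bigr)$ and invoke the standard identity $F(F^{-1}(u))=u$, valid for $u\in(0,1)$ whenever the distribution function $F$ is continuous. The crux, and the step I expect to be the main obstacle, is therefore establishing that $X_n=f_n(Y,\varepsilon_n)$ has a continuous (atomless) distribution function. By independence and Fubini, $P(X_n=a)=\int P(f_n(Y,e)=a)\de P^{\varepsilon_n}(e)$, and for each fixed $e$ the level set $\{y:f_n(y,e)=a\}$ is an interval because $f_n(\cdot,e)$ is increasing. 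Continuity of the marginal of $Y$ annihilates $P(f_n(Y,e)=a)$ whenever that level set is a single point; an atom of $X_n$ could survive only if $f_n(\cdot,e)$ were flat on a $Y$-nonnull interval for a positive-$P^{\varepsilon_n}$-measure set of $e$. I would rule this out using continuity of $f_n$ together with its increasingness in the first argument, so that the flat segments carry no $Y$-mass; this is precisely where the continuity hypotheses on $f_n$ and on the marginals of $Y,\varepsilon_n$ are needed (and where, implicitly, strict monotonicity of $f_n(\cdot,e)$ on the support of $Y$ is the clean sufficient condition). Once atomlessness of $X_n$ is in hand, the generalized-inverse identity yields $P(D_n=1)=\pi_n$, i.e.\ \eqref{ass1}, which together with part~(i) shows that $(D_1,\dots,D_N,Z)$ is a siBMM.
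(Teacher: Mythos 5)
Your part (i) is exactly the paper's argument: conditional independence because, given \(Y=y\) (equivalently \(Z=z\), since \(g\) is injective), each \(D_n\) is a function of \(\varepsilon_n\) alone; and Assumption \eqref{ass3} from the representation \(p_{D_n}(z)=P(f_n(g^{-1}(z),\varepsilon_n)\le c_n)\) combined with the monotonicity of \(f_n\) in its first argument and the sign reversal coming from \(g^{-1}\). Part (ii) also follows the paper's route (continuity of \(F_{X_n}\) plus the identity \(F(F^{-1}(u))=u\)), but the one step where you go beyond the paper contains a genuine gap: the claim that continuity of \(f_n\) together with increasingness in the first argument forces the flat segments of \(f_n(\cdot,e)\) to carry no \(Y\)-mass is false. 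Take \(f_n(y,e)=\max\{0,\min\{y,1\}\}\) (continuous, weakly increasing in \(y\), constant in \(e\)), \(Y\) uniform on \((-1,2)\) and \(\varepsilon_n\) uniform on \((0,1)\): every hypothesis of the lemma holds, yet \(X_n\) has an atom of mass \(1/3\) at \(0\), and for \(\pi_n<1/3\) one gets \(c_n=F_{X_n}^{-1}(\pi_n)=0\) and \(P(D_n=1)=P(X_n\le 0)=1/3\ne\pi_n\). So atomlessness of \(X_n\) genuinely requires the strictness you mention only in a parenthesis (e.g.\ \(f_n(\cdot,e)\) strictly increasing on the support of \(Y\), or strict monotonicity in the idiosyncratic argument); under the literal hypotheses, with ``increasing'' meaning non-decreasing as the paper's usage elsewhere confirms, your ruling-out step fails and indeed the conclusion of part (ii) can fail.

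You should know, however, that the paper's own proof has the same hole: it disposes of this point with the sentence that ``under the continuity assumptions, it can easily be verified that \(X_n\) has a continuous distribution function,'' which is precisely the assertion your analysis shows to be unjustified. So your proposal is not a wrong route --- it is the paper's route executed more carefully, and it exposes that the lemma needs a strictness hypothesis (or a direct continuity assumption on \(F_{X_n}\)) for part (ii) to hold as stated. As written, though, your proof neither establishes the continuity claim nor cleanly replaces it by the needed extra hypothesis, and one of the two must be done. Note that in the paper's actual application, Proposition \ref{lemconBMM}, the issue is harmless because there \(X_n\sim U(0,1)\) is proved directly from the copula construction, so continuity holds by construction; it is only the general threshold-model lemma that is affected.
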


%If the function \(f\) in \eqref{defthreshmod} is increasing in its first component and if \(\varepsilon_1,\ldots,\varepsilon_N\) are independent, then 
%Then, under assumption \eqref{ass2}, the function \(f\) in \eqref{defthreshmod} is increasing in its first component and, under assumption \eqref{ass3}, \(\varepsilon_1,\ldots,\varepsilon_N\) are independent. 
%As a consequence of these assumptions, 

We focus on the dependence structure of \((X_n,Y),\) which we generally model with copulas. The aim will be to investigate the relation between the copula \(C\) of \((X_n,Y)\) and the function \(f_n\) in Lemma \ref{lemBMM}. To this end, recall that
a \emph{bivariate copula} is a distribution function on \([0,1]^2\) with uniform univariate marginals. According to Sklar's theorem (see e.g. \cite{Nelsen-2006}), every bivariate distribution function \(F\) can be decomposed into its univariate marginal distribution functions \(F_1,F_2\) and a bivariate copula \(C\) such that
\begin{align}\label{eqSklar}
    F(x_1,x_2) = C(F_1(x_1),F_2(x_2))\quad \text{for all } (x_1,x_2)\in [0,1]^2.
\end{align}
The copula \(C\) is uniquely determined on the Cartesian product \(\Ran(F_1)\times \Ran(F_2),\) where \(\Ran(F_i)\) denotes the range of \(F_i.\) Further, for every bivariate copula \(C\) and all univariate distribution functions \(F_1\) and \(F_2,\) the right side of \eqref{eqSklar} defines a bivariate distribution function. Two important copulas are the \emph{product copula} \(\Pi(u,v):= uv,\) which models independence, and the \emph{upper Fr\'{e}chet copula} \(M(u,v):=\min\{u,v\},\) which models comonotonicity.
We refer to \cite{Nelsen-2006} and \cite{Durante-2016} for an overview of copulas.

%As we show in Proposition \ref{lemconBMM}, for the subclass of stochastically increasing (SI) copulas, these threshold models imply by Lemma \ref{lemBMM} the Bernoulli mixture models with the defining properties \eqref{ass1}--\eqref{ass3}. Conversely, every Bernoulli mixture model can be represented as a threshold model based on an SI copula, which is the statement of Lemma \ref{correpBMM}.
%Then, in Section \ref{secsompdefmod}, we derive comparison results for BMMs based on recently established comparison results for copula models.

Now, let \(U\) and \(V\) be real-valued random variables that are uniform on \((0,1).\) Then the joint distribution function of the bivariate random vector \((U,V)\) coincides with its copula \(C = C_{U,V}.\) 
%The case of general marginal distributions can be obtained via quantile transformations.
For fixed \(u\in [0,1],\)
denote by 
\[ C(u|v):= \partial_2 C(u,v) := d C(u,v)/dv\] the partial derivative of \(C\) with respect to its second component, which exists for Lebesgue-almost all \(v;\) see e.g. \cite[Theorem 1.6.1]{Durante-2016}. Then the conditional distribution function of \(U|V=v\) can be represented, for fixed \(u\in [0,1],\) by
\begin{align}\label{repcondcdf}
    F_{U|V=v}(u) = \mathbb{P}(U\leq u \mid V=v) = C(u|v)
\end{align}
for Lebesgue-almost all \(v\in [0,1],\)
%The conditional distribution function of \(U|V=v\) is given by the copula derivative \(v\mapsto C_{1|2}(u|v):=d C(u,v)/dv\)
%For a bivariate copula \(C\) and for \(u\in [0,1],\) denote by  the partial derivative of \(C\) with respect to the second component, which exists for Lebesgue-almost all \(v\in (0,1)\) (see \cite{Nelsen-2006}). 
where \(C(u|v)\) is increasing but not necessarily right-continuous in \(u,\) see e.g. \cite[Theorem 3.4.4]{Durante-2016}. Denote by 
\begin{align}\label{defhfuninv}
    C^{-1}(t|v):=\inf \{u\in [0,1] \mid C(u|v)\geq t\}
\end{align}
the generalized inverse of the function \(u\mapsto C(u|v).\) 
For the construction of threshold models that imply BMMs satisfying monotonicity Assumption \eqref{ass3}, we make use of the notion of stochastically increasing (SI) copulas, that are copulas of random vectors \((U,V)\) such that \(U\) is SI in \(V.\)
As a consequence of \eqref{repcondcdf}, a bivariate copula \(C\) is SI if and only if \(C(u,v)\) is concave in \(v\) for all \(u.\)
%A bivariate copula \(C\) is SI if there exists a bivariate SI random vector \((U,V)\) that has distribution function \(C.\) Equivalently, a copula is SI if it is concave in its second variable, i.e., the map \(v \mapsto C(u,v)\) is concave for all \(u\in [0,1].\) 
Many well known (sub-)families of copulas are SI such as the Gaussian copula with non-negative correlation parameter, the Clayton copula or the Gumbel-Hougaard copula; see \cite{Ansari-Rockel-2023}
%,Durante-2016,Nelsen-2006
 for an overview of various SI copula families from classes of Archimedean, extreme-value, and elliptical copulas.

Due to the following result, we can construct the function \(f_n\) with the desired monotonicity condition in Lemma \ref{lemBMM} through a transformation of an SI copula that describes the dependence structure of \((X_n,Y).\) 
Denote by \(U(0,1)\) the uniform distribution on \((0,1)\).

\begin{proposition}[Copula-based construction of siBMMs through threshold models]\label{lemconBMM}~\\
Let \(C_1,\ldots,C_N\) be SI copulas and let \(Y,\varepsilon_1,\ldots,\varepsilon_n\sim U(0,1)\) be i.i.d. random variables.
For \(f_n(v,t):= C_n^{-1}(t|v),\) with the notation defined in \eqref{defhfuninv}, consider \(X_n := f_n(Y,\varepsilon_n).\) Then the following statements hold true:
\begin{enumerate}[(i)]
    \item \label{lemconBMM1} \(X_n \sim U(0,1),\)
    \item \label{lemconBMM2} \(f_n\) is componentwise increasing,
    %\item \(Y,\varepsilon_1,\ldots,\varepsilon_n\) are independent,
    \item \label{lemconBMM3} \(F_{X_n,Y} = C_{X_n,Y} = C_n.\)
\end{enumerate}
Further, for \(c_n := \pi_n\) and \(Z:=g(Y)\) with \(g\) strictly decreasing,
\((D_1,\ldots,D_N,Z)\) with \(D_n := \1_{\{X_n \leq c_n\}}\) for all \(n\) defines a siBMM and hence satisfies the conditions \eqref{ass1}--\eqref{ass3}.
\end{proposition}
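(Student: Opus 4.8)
The plan is to establish the three enumerated properties of $f_n$ and $X_n$ and then read off the ``Further'' assertion from Lemma~\ref{lemBMM}. I would start with claim~(ii), since this is the only place the SI hypothesis is used. Monotonicity of $f_n(v,t)=C_n^{-1}(t|v)$ in the second argument $t$ is immediate from the definition~\eqref{defhfuninv} of the generalized inverse. For the first argument, note that $C_n$ being SI means, via Definition~\ref{defSI} and the representation~\eqref{repcondcdf}, that $v\mapsto C_n(u|v)$ is decreasing for each fixed $u$ (equivalently, $C_n(u,v)$ is concave in $v$). Hence for $v_1\le v_2$ one has the inclusion of sublevel sets $\{u: C_n(u|v_1)\ge t\}\supseteq\{u:C_n(u|v_2)\ge t\}$, and passing to infima gives $C_n^{-1}(t|v_1)\le C_n^{-1}(t|v_2)$; thus $f_n$ is increasing in $v$ as well.

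I would obtain claims~(i) and~(iii) together by computing the joint distribution function of $(X_n,Y)$. Conditioning on $Y=v$ and using independence of $\varepsilon_n$ and $Y$, the probability integral transform identifies the conditional law of $X_n=C_n^{-1}(\varepsilon_n\mid v)$; the only subtlety is that $C_n(u|v)$ need not be right-continuous in $u$, so the generalized inverse returns the right-continuous modification, giving $P(X_n\le u\mid Y=v)=\lim_{w\downarrow u}C_n(w|v)$. Integrating over $v'\in[0,v]$ against the uniform law of $Y$ then yields
\begin{align*}
    F_{X_n,Y}(u,v)=\int_0^v \lim_{w\downarrow u}\partial_2 C_n(w,v')\de v'=\lim_{w\downarrow u}\int_0^v\partial_2 C_n(w,v')\de v'=\lim_{w\downarrow u}C_n(w,v)=C_n(u,v),
\end{align*}
where the interchange of limit and integral is by dominated convergence (integrands bounded by $1$), the middle equality uses the fundamental theorem of calculus for the Lipschitz map $v'\mapsto C_n(w,v')$ together with $C_n(w,0)=0$, and the last equality is continuity of the copula. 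Setting $v=1$ gives $F_{X_n}(u)=C_n(u,1)=u$, which is claim~(i); as both marginals of $(X_n,Y)$ are then uniform, $F_{X_n,Y}=C_n$ is itself the copula, which is claim~(iii).

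For the ``Further'' statement I would invoke Lemma~\ref{lemBMM}. Claim~(ii) shows $f_n$ is increasing in its first component and $Y,\varepsilon_1,\ldots,\varepsilon_N$ are independent, so part~(i) of that lemma gives conditions~\eqref{ass2} and~\eqref{ass3} for $(D_1,\ldots,D_N)$ with $D_n=\1_{\{X_n\le c_n\}}$ and $Z=g(Y)$. Condition~\eqref{ass1} I would verify directly from claim~(i): since $X_n\sim U(0,1)$ and $c_n=\pi_n$, we get $P(D_n=1)=P(X_n\le\pi_n)=\pi_n$, so the continuity hypotheses of Lemma~\ref{lemBMM}(ii) are not needed. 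I expect the main obstacle to be the rigorous treatment of the probability integral transform in the previous paragraph when $u\mapsto C_n(u|v)$ is not right-continuous; the device above avoids identifying the conditional distribution pointwise and instead pins down the joint distribution function $F_{X_n,Y}$ directly, which is all that claims~(i) and~(iii) require.
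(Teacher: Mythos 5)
Your proposal is correct and follows essentially the same route as the paper's proof: part (ii) from the SI property making \(v\mapsto C_n(u|v)\) decreasing in \(v\), parts (i) and (iii) by conditioning on \(Y=v\), applying the probability integral transform and integrating \(\partial_2 C_n\) back up to the copula, and the final assertion from Lemma~\ref{lemBMM} together with \(X_n\sim U(0,1)\) so that \(P(D_n=1)=P(X_n\le\pi_n)=\pi_n\). The only notable difference is that you make explicit the right-continuity subtlety in the inverse relation \(C_n^{-1}(t|v)\le u \iff t\le \lim_{w\downarrow u}C_n(w|v)\), a point the paper's chain of equalities (which asserts \(P(C_n^{-1}(\varepsilon_n|v)\le x)=P(\varepsilon_n\le C_n(x|v))\) without comment) passes over; your limit-interchange argument resolves it correctly and, if anything, tightens the published proof.
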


\begin{remark}\label{remSIC}
\begin{enumerate}[(a)]
    \item A general construction method for BMMs through threshold models is given in \cite[Lemma 11.10]{Embrechts-2015}.  Our Proposition \ref{lemconBMM} is a copula-based version under the additional SI assumption.
    %The setting of general marginal distributions can be obtained by quantile transformations noting that copulas are invariant under increasing transformation of the marginals.
    %of the marginal distributions. 
    \item \label{remSIC2} 
    A copula of the bivariate default vector \((D_n,Z)\), constructed via the threshold model in Proposition \ref{lemconBMM}, is given by the survival copula of \(C_{X_n,Y}\), i.e.,
    \begin{align}\label{survcop}
        C_{D_n,Z}(u,v) = \hat{C}_{X_n,Y}(u,v):= u + v - 1 + C_{X_n,Y}(1-u,1-v)
    \end{align}
    for \((u,v)\in [0,1]^2.\) This follows from the fact that both \(D_n\) and \(Z_n\) are decreasing transformations of \(X_n\) and \(Y,\) respectively; see e.g. \cite[Theorem 2.4.4]{Nelsen-2006}.\footnote{The statement in this reference can be extended to random variables with discontinuous distribution function and to decreasing but not necessarily strictly decreasing transformations. 
    %In this case, the copulas of the underlying distributions are only uniquely determined on the Cartesian product of the ranges of the marginal distribution functions. 
    Hence, for any copula of two random variables, its survival copula is a copula of the decreasing transformations.}
    %Note that the copula of \((D_n,Z)\) is only uniquely determined on the closure of \(\Ran(F_{D_n})\times \Ran(F_Z),\) 
    It can easily be verified that the survival copula \(\hat{C}\) is SI if and only if \(C\) is SI.
    %\item Many members of well-known copula (sub)-families are SI, e.g., the Gaussian copula with non-negative correlation parameter, the Clayton copula or the Gumbel-Hougaard copula, see, e.g., \cite{Ansari-Rockel-2023,Durante-2016,Nelsen-2006} for an overview of SI copulas.
    %It can easily be verified that \(\hat{C}\) is SI if and only if \(C\) is SI.
   % \item \JA{rausnehmen?} The distribution function of the conditionally independent random vector \((X_1,\ldots,X_N)\) in Proposition \ref{lemconBMM} is given by the \(\conv\)-product of the copulas of \((X_1,Y),\ldots,(X_N,Y),\) studied in \cite[Corollary 2.9]{Ansari-Rueschendorf-2021}. Similarly, a copula of \((D_1,\ldots,D_N)\) is the \(\conv\)-product of the copulas of \((D_1,Z),\ldots,(D_N,Z).\) 
    %or, equivalently, the \(\conv\)-product of the survival copulas of \((X_1,Y),\ldots,(X_N,Y).\) 
    %Our main result (Theorem \ref{propcomBMMs}) and the robustness results for Bernoulli mixture models in Section \ref{secrobustBMM} are based on comparison results for \(\conv\)-products of copulas with respect to the supermodular order (see \cite{Ansari-Rueschendorf-2023}) using the copula-based construction of threshold models and a one-to-one correspondence between threshold models and BMMs.
\end{enumerate}
\end{remark}

The following result is a reverse of Proposition \ref{lemconBMM}. It states that any siBMM can be written as a copula-based threshold model with SI copulas as in Proposition \ref{lemconBMM}. 
In other words, Proposition \ref{lemconBMM} yields a general copula-based construction method of siBMMs.% on which we will focus in the rest of the paper.
%\textcolor{blue}{Insgesamt noch ausführlicher erklären/beschreiben}

\begin{lemma}[Representation of siBMMs through threshold models]~\label{correpBMM}\\
    For any BMM \((D_1,\ldots,D_N,Z)\) satisfying conditions \eqref{ass1}--\eqref{ass3}, there exist random variables \(X_1,\ldots,X_N,Y\sim U(0,1)\) 
    %and \(Y\) 
    such that 
    \begin{enumerate}[(i)]
        \item \(X_1,\ldots,X_N\) are conditionally independent given \(Y,\)
        \item \(C_{X_n,Y}\) is SI, and
        \item \(D_n = \1_{\{X_n\leq \pi_n\}}\).
    \end{enumerate}  
    Further, a copula of \((D_n,Z)\) is given by the survival copula \(\hat{C}_{X_n,Y}\) which is also SI.
\end{lemma}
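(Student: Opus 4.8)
The plan is to reverse the construction of Proposition~\ref{lemconBMM}: starting from a given siBMM $(D_1,\ldots,D_N,Z)$ satisfying \eqref{ass1}--\eqref{ass3}, I would build uniform random variables $Y$ and $X_1,\ldots,X_N$ realizing the claimed threshold representation on a (possibly enlarged) atomless probability space. The first step is to \emph{uniformize the risk factor} via the distributional transform. With an auxiliary $V\sim U(0,1)$ independent of $(D_1,\ldots,D_N,Z)$, set $U:=F_Z(Z-)+V\,(F_Z(Z)-F_Z(Z-))$, so that $U\sim U(0,1)$ and $Z=F_Z^{-1}(U)$ almost surely; putting $Y:=1-U$ yields $Y\sim U(0,1)$ and $Z=F_Z^{-1}(1-Y)=:g(Y)$ with $g$ decreasing. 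The point to verify is that conditioning on $Y$ reproduces the conditional law given $Z$: since $Z$ is a function of $Y$ and $V$ is independent of the default vector given $Z$, one obtains $P(D_1=d_1,\ldots,D_N=d_N\mid Y)=\prod_n p_{D_n}(Z)^{d_n}(1-p_{D_n}(Z))^{1-d_n}$, so the $D_n$ stay conditionally independent given $Y$ with conditional default probability $h_n(Y):=p_{D_n}(g(Y))$, which is decreasing in $Y$ by Assumption~\eqref{ass3}.

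Next I would \emph{lift the Bernoulli variables to uniforms}. With independent $\eta_1,\ldots,\eta_N\sim U(0,1)$, also independent of everything else, define
\[
  X_n := \pi_n\,\eta_n\,\1_{\{D_n=1\}} + \big(\pi_n+(1-\pi_n)\eta_n\big)\,\1_{\{D_n=0\}}\,.
\]
A direct computation splitting on $x\leq\pi_n$ and $x>\pi_n$ gives $X_n\sim U(0,1)$, and by construction $D_n=\1_{\{X_n\leq\pi_n\}}$, so (iii) holds with $c_n=\pi_n$. Conditional independence (i) is then immediate, since each $X_n$ is a function of $(D_n,\eta_n)$, the $D_n$ are conditionally independent given $Y$ by the first step, and the $\eta_n$ are independent of each other and of $Y$.

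For the SI property (ii) I would compute the conditional distribution function and check its monotonicity: for $x\in[0,\pi_n]$ one has $P(X_n\leq x\mid Y=y)=h_n(y)\,x/\pi_n$, and for $x\in[\pi_n,1]$ one has $P(X_n\leq x\mid Y=y)=h_n(y)+(1-h_n(y))(x-\pi_n)/(1-\pi_n)$. Both expressions are increasing in $h_n(y)$, since the coefficients $x/\pi_n$ and $(1-x)/(1-\pi_n)$ are nonnegative, hence decreasing in $y$ because $h_n$ is decreasing. Thus $P(X_n\geq x\mid Y=y)$ is increasing in $y$ for every $x$, so $X_n$ is SI in $Y$ and $C_{X_n,Y}$ is an SI copula by Definition~\ref{defSI}. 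Finally, since $D_n=\1_{\{X_n\leq\pi_n\}}$ is a decreasing transform of $X_n$ and $Z=g(Y)$ is a decreasing transform of $Y$, the copula of $(D_n,Z)$ is the survival copula $\hat C_{X_n,Y}$ by Remark~\ref{remSIC}\eqref{remSIC2} (whose footnote covers decreasing, possibly non-strict, transforms of variables with atoms), and $\hat C_{X_n,Y}$ is SI because $C_{X_n,Y}$ is.

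I expect the main obstacle to be the first step: carefully justifying that the passage from $Z$ to the uniform $Y$ neither destroys conditional independence nor alters the monotone conditional default probabilities, which is delicate precisely when $Z$ has atoms, so that the randomized distributional transform—and hence the atomlessness of the underlying space—is genuinely needed. The remaining steps (the uniformity of $X_n$ and the SI computation) are routine once this reduction is in place.
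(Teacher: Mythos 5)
Your proposal is correct and follows essentially the same route as the paper's proof: your \(X_n\) is exactly the paper's flipped distributional transform \(1-\tau_{D_n}\) (with \(\eta_n = 1-U_n\)), your \(Y\) is its \(1-\tau_Z\), and your verification of the threshold identity, conditional independence given \(Y\), the two-piece formula for \(P(X_n\le x\mid Y=y)\), and the survival-copula statement via Remark \ref{remSIC}\eqref{remSIC2} all mirror the paper's argument. One remark: your formula for the case \(x>\pi_n\), namely \(h_n(y)+(1-h_n(y))\frac{x-\pi_n}{1-\pi_n}\), is in fact the corrected version of the paper's display \eqref{correpBMM9}, whose coefficient \(\frac{x}{c_n}\) on \(P(D_n=1\mid A)\) arises from integrating \(u\) from the negative value \(1-\frac{x}{\pi_n}\) instead of from \(0\) (one can check the paper's expression violates uniformity of \(X_n'\)); this slip does not invalidate the lemma, since the coefficient of the conditional default probability is nonnegative in either version, so the monotonicity conclusion and hence the SI property survive unchanged.
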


In the following example, we discuss some well known SI copulas that can be used for the construction of siBMMs due to Proposition \ref{lemconBMM}.

\begin{example}[Copulas for siBMMs]\label{exBMM}
For modeling credit portfolio risks through threshold models of the form \eqref{defthreshmod}, lower tail dependencies between \(X_n\) and the common risk factor \(Y\) are crucial.
%We consider copula models that enable the modelling of different tail dependencies between \(X_n\) and the common risk factor \(Y.\) 
Gaussian copulas, which are often used as a standard model, suffer from weak tail dependencies. Stronger lower tail dependencies between \(X_n\) and \(Y\) can be modeled, e.g., with Clayton copulas. The dependence structure of a Gaussian, a Clayton and a survival Clayton copula having Kendall's tau value  \(\tau\approx 0.266\) is illustrated in Figure \ref{fig:sampleCopulaModels}. %For further well-known copula families, see \cite{Ansari-Rockel-2023}. 
    \begin{enumerate}[(a)]
        \item \label{exBMMa} The Gaussian copula \(C_{\rho}^{Ga}(u,v) = \Phi_{\rho}(\Phi^{-1}(u),\Phi^{-1}(v))\) 
        with correlation parameter \(\rho\in [-1,1]\) is SI whenever \(\rho\geq 0\) because
        %\begin{align}
            \(C_{\rho}^{Ga}(u|v) = \Phi((\Phi^{-1}(u) - \rho \, \Phi^{-1}(v))/\sqrt{1-\rho^2})\)
        %\end{align}
        is decreasing in \(v\) for all \(u.\) 
        For a Gaussian copula model \(F_{X_n,Y} = C_\rho^{Ga},\) we have by Proposition \ref{lemconBMM} the representation
        \begin{align}
            X_n = f_n(Y,\varepsilon_n) = (C_{\rho}^{Ga})^{-1}(\varepsilon_n|Y) = \Phi\left(\rho \, \Phi^{-1}(Y) + \sqrt{1-\rho^2}\, \Phi^{-1}(\varepsilon)\right),
        \end{align}
        which coincides with the single-factor Merton model in \eqref{Vasicek} for \(\rho=\sqrt\rho_n\) up to a quantile transformation of \(X_n\), \(Y\), and \(\varepsilon_n.\)        
        Since the bivariate normal distribution is radially symmetric, see \cite[Example 2.15]{Nelsen-2006}, it follows from \eqref{survcop} and \cite[Theorem 2.7.3]{Nelsen-2006} that \(C_{D_n,Z} = \hat{C}_{\rho}^{Ga} = C_{\rho}^{Ga}.\)

\begin{figure}
    \centering
    \includegraphics[scale = 0.54, trim={0 00 0 00},clip]{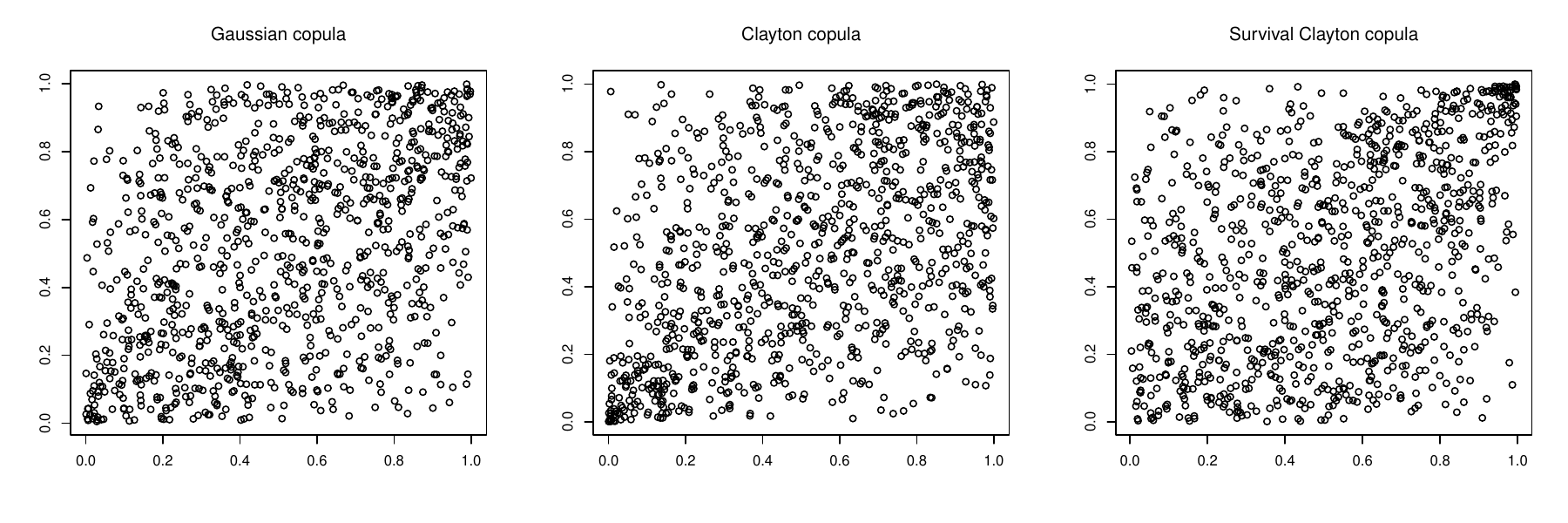}
    \caption{Samples from a Gaussian copula (left), a Clayton copula (mid) and a survival Clayton copula (right), where the Gaussian copula parameter is chosen as the square root \(\sqrt{\rho_n} \approx 0.405\) of the asset correlation obtained from the IRB formula \eqref{equ IRB asset correlation} for \(\pi_n = 0.02.\) The parameter \(\theta\) for the Clayton and survival Clayton copula is chosen as \(\theta \approx 0.723\) so that their Kendall's tau value  \(\tau\approx 0.266\) is the same as for the specified Gaussian copula. The copulas model the dependencies between the default indicator \(X_n\) and the latent factor variable \(Y\) in the respective threshold model. 
    %The Clayton copula has lower tail dependence, the upper  the survival Gumbel-Hougaard copula exhibit stronger lower tail dependencies than the Gaussian copula. 
    %Since default occurs if \(X_n\) falls below the threshold \(d_i,\) stronger lower tail dependencies (as in the Clayton copula model) between \(X_n\) and \(Y\) for all \(n\) increase the total number of defaults, see Table ...
    %Left plot: Transformed conditional default functions \(g_{i,Z} = \mathbb{P}(\delta_i =1 \mid Z = F^{-1}(\cdot))\) that are by assumption \eqref{ass3} increasing and satisfy \(\int_0^1 g_{i,Z}(t) \de t = p_i\) with \(p_i\) chosen as \(0.2.\) The purple line models independence of \(\delta_i\) and \(Z,\) while the blue line models perfect dependence of \(\delta_i\) on \(Z.\) Right plot: Integrals \(G_{i,Z}\) of the transformed conditional default probability functions.  
    }
    \label{fig:sampleCopulaModels}
\end{figure}
    
    \item \label{exBMMb} The Clayton copula family \((C_{\theta}^{Cl})_{\theta\in [-1,\infty)}\) belongs to the class of Archimedean copulas and is given by \(
    C_\theta^{Cl}(u,v) = \max\left\{0,\left(u^{-\theta}+v^{-\theta}-1\right)^{-1 / \theta}\right\}\)
for \(\theta\ne 0\) and by \(C_\theta^{Cl}(u,v) = uv\) for \(\theta =0,\)
see \cite[Chapter 4]{Nelsen-2006} or %\cite[Example 8.22]{Embrechts-2015}.
\cite[Example 11.13]{Embrechts-2015}.
For \(\theta >0,\) it holds that
\(C_{\theta}^{Cl}(u|v) = v^{-\theta-1} (u^{-\theta} + v^{-\theta} -1)^{-1/\theta -1},\)
which decreases in \(v,\) so that \(C_{\theta}^{Cl}\) is SI. %see also \cite[Table 3]{Ansari-Rockel-2023}. 
Taking the inverse of \(C_{\theta}^{Cl}(u|v)\) with respect to \(u,\) we obtain from Proposition \ref{lemconBMM} for the Clayton copula model \(F_{X_n,Y} = C_{\theta}^{Cl}\) the closed-form expression
    \begin{align}\label{Claytoncopmod}
        X_n = f_n(Y,\varepsilon_n) 
        = (C_{\theta}^{Cl})^{-1}(\varepsilon_n|Y) 
        = \left[(\varepsilon_n Y^{\theta+1})^{-\theta/(\theta+1)} - Y^{-\theta} +1 \right]^{-1/\theta}.
    \end{align}
For \(\theta > 0\) the Clayton copula admits lower (but no upper) tail dependence; see e.g. \cite[Table 3]{Ansari-Rockel-2023}. 
%\JA{ab hier ggf auch raus}
Hence, the Clayton copula threshold model may be used for modeling default indicator variables where the underlying random variable \(X_n\) depends for small values stronger on the risk factor \(Y\) than in the Gaussian copula setting. 
%Considering tail dependencies is an important issue in many financial applications because log-return data typically exhibit more dependencies in the case of extreme events than reflected in the Gaussian models, see \cite[Section 4.1]{Embrechts-2015}.
%may be used for threshold models, where default occurs when the indicator \(X_n\) falls below a small value that may stronger depend on the risk factor \(Y\) than in the Gaussian copula setting.
From \eqref{survcop}, we obtain %for the copulas of the implied Bernoulli mixture model 
that  \(C_{D_n,Z} = \widehat{C_\theta^{Cl}}.\)
\item For comparing the impact of tail dependencies on the total credit portfolio loss, we also consider \emph{survival Clayton copulas} given by
\begin{align}
    C_{\theta}^{sCl}(u,v) := \widehat{C_{\theta}^{Cl}}(u,v) = u + v - 1 + C_{\theta}^{Cl}(1-u,1-v)
\end{align}
for \((u,v)\in [0,1]^2.\) Due to straightforward symmetry arguments, we obtain for \(\theta>0\) that \(C_{\theta}^{sCl}\) is SI and exhibits upper but no lower tail dependence.
%\JA{ab hier abschneiden?}
%Hence, such a model is less suitable for modelling lower tail dependencies of log-return data \((X_n,Y)\) because joint probabilities of extreme losses will be underestimated. This is also documented in our numerical results in Section \ref{secappl} below, where Tables \ref{tab:avar1} -- \ref{tab:avar4} show that the total credit portfolio loss in the survival Clayton copula model is significantly less than in the settings considered above. 
The copulas of the siBMM, implied by the survival Clayton copula threshold model, are given by \(C_{D_n,Z} = \widehat{C_\theta^{sCl}} = C_\theta^{Cl}.\)
%using  and the radial symmetry of Clayton copulas, see \cite[Theorem 2.7.3]{Nelsen-2006}.
    %\item \label{exBMMc} The Gumbel-Hougaard copula family \((C_{\theta}^{GH})_{\theta\in [1,\infty)}\) is both from the class of Archimedean and extrem-value copulas and given by
    %\begin{align}
    %    C_{\theta}^{GH}(u,v) = \exp\left(-[(-\ln u)^{\theta} + (-\ln v)^{\theta}]^{1/\theta}\right).
    %\end{align}
    %For \(\theta=1,\) we have \(C_{\theta}^{GH}(u,v) = \Pi(u,v) = uv\)  and, for \(\theta\to \infty,\) \(C_{\theta}^{GH}\) converges pointwise to the upper Fr\'{e}chet copula \(M.\) Further, \(C_{\theta}^{GH}\) is SI for all \(\theta\) and exhibits upper (but no lower) tail dependence for all \(\theta>1.\) Hence, for modeling lower tail dependencies between \(X_n\) and \(Y\) in the threshold models, we will consider the associated survival Gumbel-Hougaard copula family \((\hat{C}_{\theta}^{GH})_{\theta\in [1,\infty)}.\) If the copula of \((X_n,Y)\) is \(\hat{C}_{\theta}^{GH})\), then, by \eqref{survcop}, the copula of \((D_n,Z)\) is \(C_{\theta}^{GH}.\) 
    \end{enumerate}
\end{example}

%Note that \(Y\) and \(\varepsilon_n\) can assumed to 
%have continuous distribution functions.
%be uniformly distributed on \((0,1).\)
%be uniformly distributed on \((0,1)\) or standard normally distributed.

%\textcolor{blue}{habe mal 'cf.' vorne dran geschrieben, in den Standardreferenzen steht immer 'striclty'. Wenn man 'strictly' wegnimmt, dann ist die Copula ist die Copula von \(D_n,Z\) nicht mehr eindeutig, aber es gibt eine, die mit der survival copula von \((X_n,Y)\) übereinstimmt. Auf dem relevanten Bereich, dem Produkt der Ranges der Verteilungsfunktionen, stimmen die Copulas überein. } 

\subsection{Comparison of Threshold Models}\label{secsompdefmod}

For the proof of Theorem \ref{propcomBMMs} and for the robustness results for siBMMs in terms of default integral functions, we use simple conditions on the bivariate SI copulas \(C_n = C_{X_n,Y},\) \(n\in \{1\,\ldots,N\},\) that enable the comparison of losses in siBMMs relying on the copula-based threshold models in Proposition \ref{lemconBMM}.

The following proposition states that a pointwise comparison of SI copulas in threshold models yields a comparison of credit portfolio losses in convex order. 
%We build upon this result to prove the general comparison result for credit risk models in terms of default integral functions as formulated in our main Theorem \ref{propcomBMMs}.

\begin{proposition}[Comparison of threshold models in convex order]\label{lemconord}~\\
    For all \(n\in \{1,\ldots,N\},\) let \(X_n,X_n',Y,Y'\sim U(0,1)\) be random variables with SI copulas \(C_{X_n,Y}\) and \(C_{X_n',Y'}.\) Assume that \(X_1,\ldots,X_N\) are conditionally independent given \(Y\) and that \(X_1',\ldots,X_N'\) are conditionally independent given \(Y'.\) Consider the threshold models \(D_n = \1_{\{X_n \leq c_n\}}\) and \(D_n'=\1_{\{X_n'\leq c_n\}}.\) Let \(l_1,\ldots,l_N\) be independent non-negative random variables that are also independent from \(\{D_n\}_n\) and \(\{D_n'\}_n.\) If \(C_{X_n,Y}(u,v)\leq C_{X_n',Y'}(u,v)\) for all \((u,v)\in [0,1]^2\) and \(n\in \{1,\ldots,N\},\) then 
\begin{align}\label{eqlemconord}
    \sum_{n=1}^N l_n D_n \leq_{cx} \sum_{n=1}^N l_n D_n' .
\end{align}
\end{proposition}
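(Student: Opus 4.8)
The plan is to lift the pointwise copula inequality to a supermodular comparison of the full conditionally independent vectors, and then to push this order through the componentwise monotone passage to default indicators and to the (possibly randomly) weighted sums. First I would recall that, since $X_1,\dots,X_N$ are conditionally independent given $Y$ (and likewise the primed family given $Y'$), the laws of $(X_1,\dots,X_N)$ and $(X_1',\dots,X_N')$ are exactly the \(\conv\)-products of the bivariate copulas $C_{X_n,Y}$ and $C_{X_n',Y'}$, respectively (see Remark \ref{remSIC} and \cite{Ansari-Rueschendorf-2021}). The decisive input is the supermodular comparison result for \(\conv\)-products of \emph{stochastically increasing} copulas from \cite{Ansari-Rueschendorf-2023}: for SI copulas, the pointwise inequality $C_{X_n,Y}(u,v)\le C_{X_n',Y'}(u,v)$ for all $(u,v)$ and all $n$ implies that the corresponding \(\conv\)-products are ordered in the supermodular order, i.e. $(X_1,\dots,X_N)\leq_{\mathrm{sm}}(X_1',\dots,X_N')$. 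This is the step I expect to be the main obstacle, since a pointwise ordering of copulas is a priori only a bivariate (concordance) statement, and it is precisely the SI hypothesis that allows it to be propagated to the multivariate supermodular order of the \(\conv\)-products.

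Next I would transfer this order to the default indicators. Because $D_n=\1_{\{X_n\le c_n\}}$ is a decreasing function of $X_n$, the \emph{same} threshold $c_n$ being used for $D_n'$, and because the supermodular order is preserved when one applies coordinatewise monotone maps of the same type (all decreasing here), I obtain $(D_1,\dots,D_N)\leq_{\mathrm{sm}}(D_1',\dots,D_N')$. The preservation under all-decreasing maps is elementary: if $f$ is supermodular and each $g_n$ is decreasing, then for $u=(g_n(x_n))_n$ and $w=(g_n(y_n))_n$ one has $g(x\vee y)=u\wedge w$ and $g(x\wedge y)=u\vee w$, so supermodularity of $f\circ g$ follows directly from that of $f$; applying the supermodular order to such functions preserves its direction.

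Finally I would convert the supermodular order of the indicator vectors into the claimed convex order of the weighted sums. Fix a realization $l=(a_1,\dots,a_N)$ of the non-negative weights. For any convex $\varphi\colon[0,\infty)\to\R$, the map $(x_1,\dots,x_N)\mapsto\varphi(\sum_n a_n x_n)$ is supermodular, since $\varphi$ is convex and the $a_n$ are non-negative; hence $(D)\leq_{\mathrm{sm}}(D')$ gives $\E\varphi(\sum_n a_n D_n)\le\E\varphi(\sum_n a_n D_n')$, i.e. $\sum_n a_n D_n\leq_{cx}\sum_n a_n D_n'$. Since $l_1,\dots,l_N$ are independent of both default models, I integrate out $l$: the function $x\mapsto\E_l\varphi(\sum_n l_n x_n)$ is again supermodular, being an expectation of supermodular functions, so evaluating $(D)\leq_{\mathrm{sm}}(D')$ on it and using independence yields $\E\varphi(\sum_n l_n D_n)\le\E\varphi(\sum_n l_n D_n')$ for every convex $\varphi$. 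As $\varphi$ was arbitrary, this is exactly \eqref{eqlemconord}. Equivalently, one may condition on $l$ and invoke closure of the convex order under mixtures (see \cite{Shaked-Shantikumar-2007}). All steps after the first are routine applications of standard closure properties of the supermodular and convex orders.
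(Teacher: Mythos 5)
Your proposal is correct and takes essentially the same route as the paper's proof: the decisive step in both is the supermodular comparison result for \(\conv\)-products of SI copulas from \cite{Ansari-Rueschendorf-2023}, which yields \((X_1,\ldots,X_N)\leq_{sm}(X_1',\ldots,X_N')\), followed by closure of the supermodular order under componentwise decreasing maps and the supermodularity of \(\varphi\bigl(\sum_{n} a_n x_n\bigr)\) for convex \(\varphi\) and non-negative weights. The only differences are cosmetic: you prove the decreasing-map closure directly where the paper cites \cite[Theorem 9.A.9(a)]{Shaked-Shantikumar-2007}, and you handle the random weights by conditioning and mixing where the paper invokes \cite[Corollary 9.A.10]{Shaked-Shantikumar-2007}.
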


%\textcolor{blue}{Der Remark sollte eher nach unserem Main Theorem kommen. Vielleicht können wir stattdessen hier ein Beispiel angeben für die Proposition oben?} \textcolor{red}{ja, wobei man den Remark etwas besser formulieren kann. Generell lässt sich zur Prop. und dem Theorem noch was anmerken. An was für ein Beispiel dachtest du?} \textcolor{blue}{Zwei threshold Modelle mit SI copulas so dass die Copulas in dem einen Modelle punktweise kleiner gleich denen in dem anderen Modell sind und wir dann entsprechend die Portfolioverluste ordnen können.}

\begin{example}\label{excomthr}
Under the assumptions of Proposition \ref{lemconord}, assume that, for all \(n,\) \((X_n,Y)\) and \((X_n',Y')\) follow a Gaussian / Clayton / survival Clayton copula model, i.e., \(C_{X_n,Y} = C_{\rho_n}^{Ga}\) and \(C_{X_n',Y'} = C_{\rho_n'}^{Ga}\) for \(\rho_n,\rho_n'\geq 0\) or \(C_{X_n,Y} = C_{\theta_n}^{Cl}\) and \(C_{X_n',Y'} = C_{\theta_n'}^{Cl}\) or \(C_{X_n,Y} = C_{\theta_n}^{sCl}\) and \(C_{X_n',Y'} = C_{\theta_n'}^{sCl}\) for \(\theta_n,\theta_n'\geq 0.\) Since the family of Gaussian / Clayton / survival Clayton copulas is increasing in its parameter w.r.t. the pointwise order and the copulas are SI for non-negative parameters, we obtain that \(\rho_n\leq \rho_n'\), resp. \(\theta_n\leq \theta_n'\),  for all \(n\) implies \eqref{eqlemconord}. %Further, since bivariate copulas are pointwise ordered if and only if their survival copulas are pointwise ordered, we obtain that in case of the survival Clayton copula model, i.e., \(C_{X_n,Y} = C_{\theta_n}^{sCl}\) and \(C_{X_n',Y'} = C_{\theta_n'}^{sCl}\) for \(\theta_n,\theta_n'\geq 0,\) the condition \(\theta_n\leq \theta_n'\) for all \(n\) implies \eqref{eqlemconord}.
  %  \begin{enumerate}[(a)]
   %     \item Gaussian copula model, i.e., \(C_{X_n,Y} = C_{\rho_n}^{Ga}\) and \(C_{X_n',Y'} = C_{\rho_n'}^{Ga}\) for \(\rho_n,\rho_n'\geq 0.\) Since the family of Gaussian copulas is increasing in its parameter w.r.t. the pointwise order and since Gaussian copulas with non-negative parameter are SI, we obtain that \(\rho_n\leq \rho_n'\) for all \(n\) implies \eqref{eqlemconord}.
    %    \item \label{excomthrb} Clayton copula model, i.e., \(C_{X_n,Y} = C_{\theta_n}^{Cl}\) and \(C_{X_n',Y'} = C_{\theta_n'}^{Cl}\) for \(\theta_n,\theta_n'\geq 0.\) Similarly to the Gaussian case, the family of Clayton copulas is increasing in its parameter w.r.t. the pointwise order. Further, for non-negative parameters, Clayton copulas are SI. Hence, \(\theta_n\leq \theta_n'\) for all \(n\) implies \eqref{eqlemconord}.
     %   \item survival Clayton copula model, i.e., \(C_{X_n,Y} = C_{\theta_n}^{sCl}\) and \(C_{X_n',Y'} = C_{\theta_n'}^{sCl}\) for \(\theta_n,\theta_n'\geq 0.\) Since bivariate copulas are pointwise ordered if and only if their survival copulas are pointwise ordered, we obtain that from part \eqref{excomthrb} that \(\theta_n\leq \theta_n'\) for all \(n\) implies \eqref{eqlemconord}.
    %\end{enumerate}
    Further well known copula families that fulfill the assumptions of Proposition \ref{lemconord} are given in \cite{Ansari-Rockel-2023}. %Note that the copulas can differ for different \(n.\)
\end{example}

\begin{remark}
    If the distribution function of the common risk factor \(Z\) is continuous\footnote{If \(F_Z\) is discontinuous, then the conditional default probability \(\mathbb{P}(D_n=1 \mid Z=F_Z^{-1}(s))\) can be similarly expressed as a generalized partial derivative of the underlying copula \(C,\) see \cite[Theorem 2.2]{Ansari-Rueschendorf-2021}. Then the representation of \(G_{D_n,Z}\) in \eqref{Gnexp} has to be modified slightly.}, the default integral function \(G_{D_n,Z}\) has a simple representation through the copula of \((D_n,Z)\) or \((X_n,Y)\) as
\begin{align}\label{Gnexp}\begin{split}
   G_{D_n,Z}(t) &= \int_0^t \mathbb{P}(D_n=1 \mid Z=F_Z^{-1}(s)) \de s
    = \int_0^t [1 - \mathbb{P}(D_n=0 \mid Z = F_Z^{-1}(s))] \de s\\
   & = t - \int_0^t \partial_2 C_{D_n,Z}(1-\pi_n, F_Z(F_Z^{-1}(s))) \de s
    = t - C_{D_n,Z}(1-\pi_n,t) \\
    &= t - \hat{C}_{X_n,Y}(1-\pi_n,t) = \pi_n - C_{X_n,Y}(\pi_n,1-t),
    \end{split}
\end{align}
where the first equality is due to the definition in \eqref{defcdpf}. The third equality follows from the representation of the conditional distribution function in \eqref{repcondcdf} and the transformation \(V=F_Z(Z)\) using continuity of \(F_Z.\) The fourth equality follows from \(F_Z(F_Z^{-1}(s)) = s\) for all \(s\) using again continuity of \(F_Z.\) The last two equalities are due to \eqref{survcop}.
Hence, by \eqref{Gnexp}, a pointwise comparison of the default integral functions \(G_{D_n,Z}\) follows from a pointwise comparison of the underlying copulas in the siBMMs as in Proposition \ref{lemconord}. 
\end{remark}

\section{Proofs}\label{appendB}

This appendix addresses the proofs of all results in this paper. 
We move the proof of Theorem \ref{propcomBMMs} after the proof of Proposition \ref{lemconord} because it is based on several results from Appendix \ref{secmain}.

 \begin{proof}[Proof of Lemma \ref{lemBMM}:]
Condition \eqref{ass2} follows from the fact that functions of independent random variables are independent. For condition \ref{ass3}, we observe that
    \begin{align*}
        \mathbb{P}( D_n = 1 \mid Z=z ) &= \mathbb{P}(X_n\leq c_n \mid Z = z) = \mathbb{P}( f_n(Y,\varepsilon_n) \leq c_n \mid g(Y) = z)\\
    &= \mathbb{P}(f_n(g^{-1}(z),\varepsilon_n)\leq c_n \mid Y = g^{-1}(z))
    = \mathbb{P}(f_n(g^{-1}(z),\varepsilon_n) \leq c_n),
    \end{align*}
    which is increasing in \(z\) because \(f\) is increasing in its first component and \(g\) (and thus also \(g^{-1}\)) is strictly decreasing. The last equality follows from independence of \(\varepsilon_n\) and \(Y.\)\\
    Under the continuity assumptions, it can easily be verified that \(X_n\) has a continuous distribution function. Hence, \(c_n = F_{X_n}^{-1}(\pi_n)\) implies \(\pi_n = \mathbb{P}(X_n \leq c_n) = \mathbb{P}(D_n = 1).\) 
\end{proof}

\begin{proof}[Proof of Proposition \ref{lemconBMM}:] We first show \eqref{lemconBMM3}.
    Let \((x,y)\in [0,1].\) Then we have
    \begin{align*}
        F_{X_n,Y}(x,y) &=\mathbb{P}(X_n\leq x, Y\leq y) = \int_0^y \mathbb{P}(f_n(Y,\varepsilon_n)\leq x \mid Y=v)\de v \\
        &= \int_0^y \mathbb{P}(f_n(v,\varepsilon_n)\leq x )\de v = \int_0^y \mathbb{P}(C_n^{-1}(\varepsilon_n| v) \leq x) \de v
        = \int_0^y \mathbb{P}(\varepsilon_n \leq C_n(x|v))\de v \\
        &= \int_0^y C_n(x|v) \de v = \int_0^y \partial C_n(x,v)/\partial v \de v = C_n(x,y)=C_{X_n,Y}(x,y),
    \end{align*}
    where the third equality follows from independence of \(Y\) and \(\varepsilon_n.\) For the sixth equality, we use that \(\varepsilon_n\) is uniformly distributed on \((0,1).\) 
    \\
    Statement \eqref{lemconBMM1} follows from \eqref{lemconBMM3} because \(F_{X_n,Y}(x,1) = C_n(x,1) = x\) for all \(x\in [0,1].\)\\
    To show statement \eqref{lemconBMM2}, first observe that \(f_n(v,t)\) is increasing in its second component~$t$ because \(C_n(t|v)\) and thus \(C_n^{-1}(t|v)\) are increasing in \(t.\) Further, \(f_n(v,t) = C_n^{-1}(t|v)\) is increasing in $v$ for all~\(t\,\). This follows from the fact that $C_n$ is SI, which in turn implies the concavity of \(C_n(t,v)\) in $v$ for all \(t\), and the fact that \(C_n(t|v) = (\partial C_n(t,v)/\partial v)\) is decreasing in \(v\) for all \(t\) if and only if \(C_n^{-1}(t|v)\) is increasing in \(v\) for all \(t\) (note that the inverse is w.r.t. the first component $t$ while the distribution functions are compared for different $v$).  \\
    Since \(X_n\sim U(0,1),\) the default indicator \(D_n\) satisfies condition \eqref{ass1}. Properties \eqref{ass2} and \eqref{ass3} follow from Lemma \ref{lemBMM}.
    \end{proof}

\begin{proof}[Proof of Lemma \ref{correpBMM}:]
    We construct \(X_n\) and \(Y\) through the distributional transform of \(D_n\) and \(Z.\) To this end, define for a random variable \(\xi\) its distributional transform by
    \(\tau_\xi := F_\xi(\xi,W):= F_\xi(\xi-) + W (F_\xi(\xi)-F_\xi(\xi-)),\)
    where \(W\sim U(0,1)\) is independent from \(\xi\) and where \(F(x-)\) denotes the left-hand limit of \(F\) at \(x.\) It is
    \begin{align}\label{propdisttraf}
        \tau_\xi &\sim U(0,1) \quad \text{and} \quad \xi = F_\xi^{-1}(\tau_\xi) \quad P\text{-almost surely},
    \end{align}
    see \cite[Proposition 2.1]{Rueschendorf-2009}. For 
    \(X_n':= \tau_{D_n} = F_{D_n}(D_n,U_n)\) and \(Y':= \tau_Z = F_Z(Z,V),\) with \(U_1,\ldots,U_N,V\sim U(0,1)\) i.i.d. and independent from \(\{D_n\}_n\) and \(Z,\) define 
    \begin{align}\label{defXn}
        X_n:= 1-X_n' \quad \text{and} \quad Y:=1-Y'.
    \end{align}
    Then \eqref{propdisttraf} implies \(X_n,Y\sim U(0,1).\) Further, we have
    \begin{align}\label{correpBMM3}
        X_n &\leq c_n & &\Longleftrightarrow  & X_n'&\geq 1-c_n\\
        \nonumber &&&\Longleftrightarrow & F_{D_n}(D_n-) + U_n (F_{D_n}(D_n)-F_{D_n}(D_n-)) &\geq 1-c_n \\
        \nonumber &&&\Longleftrightarrow & D_n&> 0\\
        \nonumber &&&\Longleftrightarrow & D_n &=1,
    \end{align}
    where the third equivalence results from \(c_n = \pi_n\) (note that $c_n=F_{X_n}^{-1}(\pi_n)=\pi_n$ here since $X_n\sim U(0,1)$) and the definition of the distributional transform, which can be seen by a sketch of \(F_{D_n}.\) Hence, it holds that \(D_n = \1_{\{X_n\leq c_n\}}.\)   
    
    To show the SI property, we distinguish three cases and first assume that \(x < c_n.\) 
    Denote by \(\sigma(Z,V)\) the \(\sigma\)-algebra generated by \((Z,V).\)
    Then, for any event \(A\in \sigma(Z,V)\) we have
    \begin{align}\label{correpBMM8}
        \mathbb{P}(F_{D_n}(D_n,U_n) \geq 1-x \mid A) &= \int_0^1 \mathbb{P}(F_{D_n}(D_n,U_n) \geq 1-x \mid A, U_n = u) \de u\\
        \nonumber & = \int_0^1 \mathbb{P}(F_{D_n}(D_n,u) \geq 1-x \mid A) \de u = \frac{x}{c_n} \mathbb{P}(D_n = 1 \mid A),
    \end{align}
    where the second equality follows from independence of \(U_n\) from \((D_n,Z,V).\) For the third equality, we obtain for \(x< c_n\) that \(F_{D_n}(D_n,u)\geq 1-x\) is equivalent to \(D_n = 1\) and \(u\geq 1-x/c_n\,\) so that
$$
\int_0^1 \mathbb{P}(F_{D_n}(D_n,u) \geq 1-x \mid A) \de u =\frac{x}{c_n} \mathbb{P}(D_n = 1 \mid A)
$$    
as in the previous case. If $x=c_n$, the condition \(F_{D_n}(D_n,u)\geq 1-c_n=1-\pi_n\) is satisfied when $D_n=0$ only for $u=1$ and it holds for $D_n=1$ whenever \(u\geq 1-x/c_n=0\,\) which is always the case. Hence, in that case we have 
$$
\int_0^1 \mathbb{P}(F_{D_n}(D_n,u) \geq 1-x \mid A) \de u =\int_0^1  \mathbb{P}(D_n=1 \mid A) \de u  =\frac{x}{c_n} \mathbb{P}(D_n = 1 \mid A).
$$ 
    For \(x>c_n\) and hence $1-x<1-c_n$, we have for $D_n=1$ that \( F_{D_n}(D_n,u)=1-\pi_n+u\pi_n \geq 1-x\) for $u\geq 1-\frac{x}{\pi_n}$. Moreover, for $D_n=0$ we have \(F_{D_n}(D_n,u)=u (1-\pi_n)\geq 1-x\) for $u\geq \frac{1-x}{1-\pi_n}.$  Thus, we obtain (since $\pi_n=c_n$) that
    \begin{align}\label{correpBMM9}
        \mathbb{P}(F_{D_n}(D_n,U_n) \geq 1-x \mid A) &=  \int_{\frac{1-x}{1-\pi_n}}^1 \mathbb{P}(D_n=0 \mid A) \de u+ \int_{1-\frac{x}{\pi_n}}^1 \mathbb{P}(D_n=1 \mid A) \de u\\
        \nonumber  &= \left(1 - \frac{1-x}{1-c_n}\right) (1-\mathbb{P}(D_n = 1 \mid A))+ \frac{x}{c_n} \mathbb{P}(D_n=1\mid A) \\
        &= \frac{x-c_n}{1-c_n} + \left(\frac{x}{c_n} - \frac{x-c_n}{1-c_n}\right) \mathbb{P}(D_n = 1 \mid A).
    \end{align}
\begin{comment}
    \textcolor{blue}{This is different to the previous version below. However, it seems that the calculation below is missing the case $D_n=1$ and $u\leq 1-x/\pi_n$ for $F_{D_n}(D_n,u)<1-x$. If this second term is taken into account both calculations agree.}
    \begin{align}\label{correpBMM9}
        \mathbb{P}(F_{D_n}(D_n,U_n) \geq 1-x \mid A) &= 1 - \mathbb{P}(F_{D_n}(D_n,U_n) < 1-x \mid A) \\
        \nonumber &= 1- \int_0^1 \mathbb{P}(F_{D_n}(D_n,U_n) < 1-x \mid A, U_n = u) \de u\\
        \nonumber &= 1- \int_0^1 \mathbb{P}(F_{D_n}(D_n,u) < 1-x \mid A) \de u\\
        \nonumber  &= 1 - \frac{1-x}{1-c_n} \mathbb{P}(D_n = 0 \mid A)\\
        \nonumber &= 1 - \frac{1-x}{1-c_n} (1-\mathbb{P}(D_n=1\mid A)).
    \end{align}
\end{comment}
    We obtain for a random variable \(V\sim U(0,1)\) independent of \(\{D_n\}_n,\) \(\{U_n\}_n,\) and \(Z\) that
%    \textcolor{blue}{The lats two rows of this equation would have to be adjusted as indicated below. The main result that $C_{X_n,Y}$ is SI remains unaffected though.}
    \begin{align}\label{correpBMM1}
        \mathbb{P}(X_n\leq x \mid Y=y) 
        &= \mathbb{P}(X_n'\geq 1-x \mid \tau_Z = 1-y)\\
        \nonumber&= P\left(F_{D_n}(D_n,U_n)\geq 1-x\mid F_{Z}(Z,V) = 1-y\right)\\
        %\nonumber&= \begin{cases}
        %    \frac{x}{c_n} \mathbb{P}(D_n = 1 \mid Z = F_Z^{-1}(1-y), V = v_y), &\text{if } x\leq c_n,\\
        %    1 - \frac{1-x}{1-c_n} (1-\mathbb{P}(D_n=1\mid Z = F_Z^{-1}(1-y), V = v_y)), &\text{if } x > c_n,\\
        %\end{cases} \\
        \nonumber&= \begin{cases}
            \frac{x}{c_n} \mathbb{P}(D_n = 1 \mid Z = F_Z^{-1}(1-y), V = v_y), &\text{if } x\leq c_n,\\
           \frac{x-c_n}{1-c_n} + \left(\frac{x}{c_n} - \frac{x-c_n}{1-c_n}\right) \mathbb{P}(D_n=1\mid Z = F_Z^{-1}(1-y), V = v_y), &\text{if } x > c_n,\\
        \end{cases} \\
        %\nonumber&= \begin{cases}
        %   \frac{x}{c_n} \mathbb{P}(D_n = 1 \mid Z = F_Z^{-1}(1-y)), &\text{if } x\leq c_n,\\
        %    1 - \frac{1-x}{1-c_n} (1-\mathbb{P}(D_n=1\mid Z = F_Z^{-1}(1-y))), &\text{if } x > c_n,\\
        %\end{cases}\\
        \nonumber&= \begin{cases}
            \frac{x}{c_n} \mathbb{P}(D_n = 1 \mid Z = F_Z^{-1}(1-y)), &\text{if } x\leq c_n,\\
           \frac{x-c_n}{1-c_n} + \left(\frac{x}{c_n} - \frac{x-c_n}{1-c_n}\right) \mathbb{P}(D_n=1\mid Z = F_Z^{-1}(1-y)), &\text{if } x > c_n,\\
        \end{cases} 
    \end{align}
    for some \(v_y\in \R\) depending only on \(F_Z\) and \(y\) such that \(F_{Z}(Z-) + V(F_{Z}(Z)-F_{Z}(Z-)) = 1-y\) for \(Z = F_Z^{-1}(1-y).\)
    For the third equality we also apply \eqref{correpBMM8} and \eqref{correpBMM9}.
    %holds true because similar to \eqref{correpBMM3} we have \(\{F_{D_n}(D_n,U_n)\geq 1-x\} = \{D_n \geq F_{D_n}^{-1}(1-x)\}\) which is \(\{D_n =1 \}\) if \(x\leq c_n\) and \(\{D_n \geq 0\} = \Omega\) if \(x> c_n.\)
    The last equality follows from independence of \(V\) from \(D_n\) and \(Z.\) Since by condition \eqref{ass3} the conditional default probability is increasing in \(z,\) we obtain from \eqref{correpBMM1} that \(\mathbb{P}(X_n\leq x \mid Y=y)\) is decreasing in \(y.\) Hence, \(C_{X_n,Y}(u,v) = \int_0^v \mathbb{P}(X_n\leq u \mid Y=y) \de y\) is concave in \(v\) for all \(u,\) which implies that \(C_{X_n,Y}\) is SI.
    %The second equality follows from the definition of the distributional transform. 

    It remains to prove conditional independence of \(X_1,\ldots,X_N\) given \(Y,\) which follows from
    \begin{align*}
        & P\big(\bigcap \{X_n\leq x_n\} \mid Y=y\big) = P\big(\bigcap \{X_n'\geq 1-x_n\} \mid Y' = 1-y\big) \\
        \nonumber &= P\big(\bigcap \{F_{D_n}(D_n,U_n)\geq 1-x_n\} \mid F_{Z}(Z,V) = 1-y\big) \\
        \nonumber &= \int_{[0,1]^{N+1}} P\big(\bigcap \{F_{D_n}(D_n,u_n)\geq 1-x_n\} \mid F_{Z}(Z,v) = 1- y\big) \de \lambda^{N+1}(u_1,\ldots,u_N,v)\\
        %\nonumber &= \int_{[0,1]^{N+1}} P\big(\bigcap \{F_{D_n}(D_n,u_n)\geq 1-x_n\} \mid F_Z(Z) = 1- y\big) \de \lambda^{N+1}(u_1,\ldots,u_N,v)\\
        \nonumber &= \int_{[0,1]^{N+1}} P\big(\bigcap \{D_n\geq F_{D_n}^{-1}(1-x_n)\} \mid Z = F_Z^{-1}(1- y)\big) \de \lambda^{N+1}(u_1,\ldots,u_N,v)\\
        \nonumber &= \int_{[0,1]^{N+1}} \prod_n P\big(D_n\geq F_{D_n}^{-1}(1-x_n) \mid Z = F_Z^{-1}(1- y)\big) \de \lambda^{N+1}(u_1,\ldots,u_N,v)\\
        \nonumber &= \prod_n P\big(D_n\geq F_{D_n}^{-1}(1-x_n) \mid Z = F_Z^{-1}(1- y)\big) \\
        \nonumber &=  \prod_n \int_{[0,1]^2} P\big(D_n\geq F_{D_n}^{-1}(1-x_n) \mid Z = F_Z^{-1}(1- y)\big) \de \lambda^2 (u,v)\\
        \nonumber &= \cdots = \prod_n \mathbb{P}(X_n\leq x_n\mid Y=y),
    \end{align*}
    where we use for the third equality that \(U_1,\ldots,U_N,V\) are i.i.d. uniformly distributed on \((0,1),\) for the fourth equality, we use independence of \(U_1,\ldots,U_N,V\) from \(\{D_n\}_n\) and \(Z.\) For the fifth equality, we use conditional independence of \(D_1,\ldots,D_N\) given \(Z.\)\\
    The property that \(C_{D_n,Z} = \hat{C}_{X_n,Y}\) (which is SI) follows from Remark \ref{remSIC}\eqref{remSIC2}.
\end{proof}

\begin{proof}[Proof of Proposition \ref{lemconord}:]
By assumption, the random variables \(X_1,\ldots,X_N\) are conditionally independent given \(Y\) and the random variables \(X_1',\ldots,X_N'\) are conditionally independent given \(Y'.\) Hence, we obtain from \cite[Corollary 4]{Ansari-Rueschendorf-2023} that \((X_1,\ldots,X_N) \leq_{sm} (X_1',\ldots,X_N'),\) using that \(X_n \eqd X_n'\) for all \(n\) and \(Y\eqd Y'\) and using that the copulas \(C_{X_n,Y}\) and \(C_{X_n',Y'}\) are SI and satisfy \(C_{X_n,Y}(u,v) \leq C_{X_n',Y'}(u,v)\) for all \((u,v)\in [0,1]^2,\) where \(\leq_{sm}\) denotes the supermodular order which compares expectations of supermodular functions of random vectors.
    Since the supermodular order is closed under decreasing transformations (\cite[Theorem 9.A.9(a)]{Shaked-Shantikumar-2007}), it follows that \((D_1,\ldots,D_N)\leq_{sm} (D_1',\ldots,D_N').\) Applying \cite[Corollary 9.A.10]{Shaked-Shantikumar-2007}, we obtain \((l_1 D_1,\ldots,l_N D_N)\leq_{sm} (l_1 D_1',\ldots, l_N D_N')\) using non-negativity as well as independence of \(\{l_n\}_n\) from \(\{D_n\}_n\) and \(\{D_n'\}_n.\) Since for any convex function \(\varphi\colon \R\to \R,\) the function \((x_1,\ldots,x_N) \mapsto \varphi(x_1+\cdots + x_N)\) is a supermodular function, see \cite[Theorem 8.3.3 and its proof]{Mueller-Stoyan-2002}, the convex comparison in \eqref{eqlemconord} follows.
\end{proof}

\begin{proof}[Proof of Theorem \ref{propcomBMMs}:]
We construct copula-based threshold models which coincide with the siBMMs and are built by SI copulas that allow a pointwise comparison. 
Then, Proposition \ref{lemconord} implies the statement.

To this end, consider \(\{X_n\}_n\) and \(Y\) defined in \eqref{defXn}. We obtain from Lemma \ref{correpBMM} that \(X_n,Y\sim U(0,1)\) and \(C_{X_n,Y}\) is SI for all \(n.\) Further, \(X_1,\ldots,X_N\) are conditionally independent given \(Y\) and we have \(D_n = \1_{\{X_n \leq c_n\}}\) for all \(n.\)
Similarly, consider random variables \((X_n')_n\) and \(Y'\) associated with the default model \((D_1',\ldots,D_N',Z').\)
%Due to \eqref{correpBMM5}, it follows for \(x_n = u\) and \(x_m = \infty\) for \(m\ne n\) that
Then, we obtain for all \((u,v)\in [0,1]^2\) and for all \(n\) that
\begin{align*}
    C_{X_n,Y}(u,v) &= \int_0^v \mathbb{P}(X_n\leq u \mid Y = y) \de y  = u - \int_v^1 \mathbb{P}(X_n\leq u \mid Y = y) \de y \\
    & = \begin{cases}
        u - \int_v^1 \frac{u}{c_n} \mathbb{P}(D_n = 1 \mid Z = F_Z^{-1}(1-y)) \de y, &\text{if } u \leq c_n,\\
        %1 - \int_v^1 \left[1 - \frac{1-u}{1-c_n} (1-\mathbb{P}(D_n=1\mid Z = F_Z^{-1}(1-y)))\right] \de y, &\text{if } u > c_n,
        u - \int_v^1 \left[\frac{u-c_n}{1-c_n} + \left(\frac{u}{c_n} - \frac{u-c_n}{1-c_n}\right) \mathbb{P}(D_n=1\mid Z = F_Z^{-1}(1-y))\right] \de y, &\text{if } u > c_n,
    \end{cases}\\
    & = \begin{cases}
        u - \frac{u}{c_n} \int_0^{1-v}  \mathbb{P}(D_n = 1 \mid Z = F_Z^{-1}(y')) \de y', &\text{if } u \leq c_n,\\
        u-\frac{(1-v)(u-c_n)}{1-c_n} -\left(\frac{u}{c_n}-\frac{u-c_n}{1-c_n}\right) \int_0^{1-v} \mathbb{P}(D_n = 1 \mid Z = F_Z^{-1}(y')) \de y' , &\text{if } u > c_n,
    \end{cases}\\
    & = \begin{cases}
        u - \frac{u}{c_n} G_{n,Z}(1-v), &\text{if } u \leq c_n,\\
        %v + \frac{1-u}{1-c_n} \left( 1 - v - G_{n,Z}(1-v)\right) , &\text{if } u > c_n,
        u-\frac{(1-v)(u-c_n)}{1-c_n} -\left(\frac{u}{c_n}-\frac{u-c_n}{1-c_n}\right) G_{D_n,Z}(1-v) , &\text{if } u > c_n,
        %1- (1-v) \frac{u-c_n}{1-c_n} - \left(\frac{u}{c_n}-\frac{u-c_n}{1-c_n}\right) G_{n,Z}(1-v) , &\text{if } u > c_n,
    \end{cases}\\
    & \leq \begin{cases}
        u - \frac{u}{c_n} G_{n,Z'}(1-v), &\text{if } u \leq c_n,\\
        u-\frac{(1-v)(u-c_n)}{1-c_n} -\left(\frac{u}{c_n}-\frac{u-c_n}{1-c_n}\right) G_{D_n,Z'}(1-v) , &\text{if } u > c_n,
    \end{cases}\\
    & = \cdots = C_{X_n',Y'}(u,v),
\end{align*}
%for all \((u,v)\in [0,1]^2,\)
where the third equality follows from \eqref{correpBMM1}. For the fourth equality, we transform \(y':= 1-y.\) The fifth equality holds true by definition of \(G_{n,Z}\) in \eqref{defcdpf}, and the inequality is due to the assumption that \(G_{n,Z}(z) \geq G_{n,Z^*}(z)\) for all \(z,\) where we use that \(\frac{u}{c_n}-\frac{u-c_n}{1-c_n} \geq 0.\)
%\textcolor{blue}{The proof may have to be adjusted similarly to the proof of Lemma 3.5.}
%Follows from \cite[Theorem 1]{Ansari-Rueschendorf-2023}.
\end{proof}

\begin{proof}[Proof of Theorem \ref{thebmmsiBMM}:]
    Since \(p_{D_n'} \circ F_{Z'}^{-1}\) is the increasing rearrangement of \(p_{D_n} \circ F_{Z}^{-1},\) it follows that 
    \begin{align}\label{defSchfun}
        F_{D_n|Z=F_Z^{-1}(\cdot)}(y) =_S F_{D_n'|Z'=F_{Z'}^{-1}(\cdot)}(y) \quad \text{for all } y \in \R,
    \end{align}
    where \(=_S\) denotes equality in the Schur order for functions, i.e., the increasing rearrangements of the left and right sides of \eqref{defSchfun} coincide. Due to the assumptions, we know that \(D_n'\) is stochastically increasing in \(Z',\) and \(D_n\eqd D_n'\) for all \(n\) as well as \(Z\eqd Z'.\) Hence, we obtain from \cite[Proposition 4.2]{Ansari-Ritter-2024} that \((D_1,\ldots,D_N)\leq_{sm} (D_1',\ldots,D_N').\) Then, proceeding in the same way as in the end of the proof of Proposition \ref{lemconord}, we obtain \eqref{eqthebmmsiBMM}.
\end{proof}

\begin{proof}[Proof of Theorem \ref{mainthe}:]
Using \(G_{D_{n}^x,Z^x}\in \cF_{icx}^{\pi_n}\) for all \(x\in I\), it follows that \(\underline{G}_{n},\overline{G}_{n}\in \cF_{icx}^{\pi_n}\) with \(\underline{G}_n\geq G_{D_n^x,Z^x} \geq \overline{G}_n\) pointwise for all \(x.\) Then, the assertions follow from Theorem \ref{propcomBMMs} using the one-to-one correspondence between default integral functions and the class \(\cF_{icx}^{\pi_n}.\)
%Apply \cite[Theorem 1]{Ansari-Rueschendorf-2023}.
\end{proof}

\end{document}